\documentclass[UKenglish,cleveref, autoref, thm-restate]{lipics-v2021}

\hideLIPIcs  

\graphicspath{{./graphics/}}

\title{Improved Approximation Algorithms and Hardness Results for Shortest Common Superstring with Reverse Complements} 

\titlerunning{Improved Approximation Algorithms and Hardness Results for SCS-RC} 

\author{Ryosuke Yamano}{Department of Computer Science, Graduate School of Information Science and Technology, The University of Tokyo, Japan \and Division of Medical Data Informatics, Human Genome Center, Institute of Medical Science, The University of Tokyo, Japan}{ryoyamano15@g.ecc.u-tokyo.ac.jp}{https://orcid.org/0009-0002-1683-5179}{}
\author{Tetsuo Shibuya}{Division of Medical Data Informatics, Human Genome Center, Institute of Medical Science, The University of Tokyo, Japan}
{tshibuya@hgc.jp}{https://orcid.org/0000-0003-1514-5766}{}

\authorrunning{R. Yamano and T. Shibuya}

\Copyright{R. Yamano and T. Shibuya} 

\ccsdesc[100]{Theory of computation~Approximation algorithms analysis}  

\keywords{Shortest Common Superstring, Approximation Algorithms, DNA Assembly} 

\relatedversion{} 

\funding{This work was supported by MEXT KAKENHI Grant Numbers 21H05052, 23H03345, and 23K18501.}

\nolinenumbers 

\newcommand{\op}[1]{\mathrm{#1}}

\usepackage[ruled]{algorithm}
\usepackage{subcaption}

\EventEditors{John Q. Open and Joan R. Access}
\EventNoEds{2}
\EventLongTitle{42nd Conference on Very Important Topics (CVIT 2016)}
\EventShortTitle{CVIT 2016}
\EventAcronym{CVIT}
\EventYear{2016}
\EventDate{December 24--27, 2016}
\EventLocation{Little Whinging, United Kingdom}
\EventLogo{}
\SeriesVolume{42}
\ArticleNo{23}

\begin{document}

\maketitle

\begin{abstract}
The Shortest Common Superstring (SCS) problem is a fundamental task in sequence analysis.
In genome assembly, however, the double-stranded nature of DNA implies that each fragment may occur either in its original orientation or as its reverse complement. This motivates the Shortest Common Superstring with Reverse Complements (SCS-RC) problem, which asks for a shortest string that contains, for each input string, either the string itself or its reverse complement as a substring.
The previously best-known approximation ratio for SCS-RC was $\frac{23}{8}$.
In this paper, we present a new approximation algorithm achieving an improved ratio of $\frac{8}{3}$. Our approach computes an optimal constrained cycle cover by reducing the problem, via a novel gadget construction, to a maximum-weight perfect matching in a general graph.
We also investigate the computational hardness of SCS-RC. While the decision version is known to be NP-complete, no explicit inapproximability results were previously established. We show that the hardness of SCS carries over to SCS-RC through a polynomial-time reduction, implying that it is NP-hard to approximate SCS-RC within a factor better than $\frac{333}{332}$. Notably, this hardness result holds even for the DNA alphabet.
\end{abstract}

\section{Introduction}
The Shortest Common Superstring (SCS) problem is to find a shortest string that contains every string in a given set as a substring.
It is a classical problem in combinatorial optimization and has numerous applications across a wide range of fields \cite{Gevezes2014}. 
A particularly important application arises in DNA assembly. A DNA molecule consists of four nucleotides, namely adenine, thymine, guanine, and cytosine. 
The sequencing task aims to reconstruct the original molecule from a collection of short reads, which can be naturally viewed as an instance of the SCS problem over a quaternary alphabet.

However, the standard SCS formulation does not capture a crucial aspect of genome assembly, namely the double-stranded nature of DNA.
Each read may originate from either strand of the DNA molecule and may therefore appear in its original orientation or as its reverse complement.
Since sequencing technologies typically do not reveal the strand of origin, the target genome must contain, for each read, either the read itself or its reverse complement as a substring \cite{MyersJr+2016+126+132}.
This observation motivates the \emph{Shortest Common Superstring with Reverse Complements} (SCS-RC) problem, which seeks a shortest string that contains, for each input string, either the string itself or its reverse complement.

Over the past decades, the standard SCS problem has been extensively studied, resulting in a rich body of literature, including a wide range of approximation algorithms \cite{Blum.et.al, CPM1996.2+2/3approx, Breslaure.1997.OverlapRotationLemma, SIAM.2+1/2approx, KAPLAN200513.greedy3.5n, SODA13.2+11/23approx, STOC22.ImprovedApproximateGuarantees, englert_et_al:LIPIcs.ISAAC.2023.29} as well as several inapproximability results \cite{1217/1216.inapprox.MFCS,333/332.inapprox.CATS2013}. 
The currently best-known approximation ratio guarantee is $\frac{\sqrt{67} + 14}{9} \approx 2.465$, due to Englert et al. \cite{englert_et_al:LIPIcs.ISAAC.2023.29}, whereas the strongest inapproximability result establishes that it is NP-hard to approximate SCS within a factor better than $\frac{333}{332}$ \cite{333/332.inapprox.CATS2013}.

In contrast, despite its clear relevance to genome assembly, the SCS-RC problem has received considerably less attention. 
Jiang et al.\ \cite{JIANG1992195} extended the work of Blum et al.\ \cite{Blum.et.al} and presented a 3-approximation algorithm for SCS-RC. 
Yamano and Shibuya \cite{YamanoShibuya2026SCSRC} subsequently improved the approximation ratio to $\frac{23}{8} = 2.875$, which remains the best known. 
On the hardness side, Kececioglu \cite{kececioglu} proved the NP-completeness of the decision version of SCS-RC, which is, to the best of our knowledge, the only known hardness result. 
These results highlight a substantial gap between the current understanding of SCS and SCS-RC. 
In this work, we address this gap by improving both the approximation ratio guarantees and the hardness results for SCS-RC. 

\subsection{Our Contributions} \label{sec: Our Contributions}
We first establish \cref{thm: 2.67 approximation},
improving the approximation ratio for SCS-RC
from $\frac{23}{8} = 2.875$ to $\frac{8}{3} \approx 2.67$.
Our result extends the classical $\frac{8}{3}$-approximation algorithm
for SCS due to Breslauer et al.~\cite{Breslaure.1997.OverlapRotationLemma},
which is based on computing a minimum-weight cycle cover,
that is, a collection of vertex-disjoint directed cycles
covering all vertices of a weighted directed graph
encoding string overlaps in the SCS setting.
In our setting, this notion is modified as follows:
instead of requiring the cycle cover to include all vertices,
we require that exactly one vertex is selected from each complementary pair.

Unlike in the classical SCS setting,
the presence of reverse complements introduces an additional selection constraint.
This constraint prevents a direct formulation
as a standard cycle cover problem
and precludes a straightforward reduction
to weighted bipartite matching.

More generally, consider a variant of the cycle cover problem
in which the vertex set is partitioned into several arbitrarily given disjoint subsets,
and one is required to select exactly one vertex from each subset.
The minimum-weight cycle cover problem under this constraint
is known to be NP-hard~\cite{noon1988generalized}.
Consequently, our constrained cycle cover problem
cannot be reduced to this more general variant
in a straightforward manner.

Although closely related cycle cover formulations have been studied
in the context of reverse complements, they typically permit self-loops,
which enables simple greedy algorithms~\cite{JIANG1992195, YamanoShibuya2026SCSRC}.
In contrast, excluding self-loops is essential in our setting
and makes these approaches inapplicable.

We overcome these difficulties by introducing a novel gadget construction
that reduces the resulting constrained minimum-weight cycle cover problem
to a weighted perfect matching problem in a general graph.

\begin{theorem} \label{thm: 2.67 approximation}
The SCS-RC problem admits an $\frac{8}{3}$-approximation algorithm.
\end{theorem}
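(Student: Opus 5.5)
We follow the template of Breslauer et al.'s $\frac{8}{3}$-approximation for SCS and adapt each ingredient to reverse complements. Let $S=\{s_1,\dots,s_n\}$ be substring-free, in the strong sense that no $s_i$ or $\overline{s_i}$ is a substring of another $s_j$ or $\overline{s_j}$, where $\overline{s}$ denotes the reverse complement. We consider the distance graph on the $2n$ vertices $S\cup\overline{S}$, with $w(u,v)=|\op{pref}(u,v)|$.

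A \emph{constrained cycle cover} is a set of vertex-disjoint directed cycles, without self-loops, that contains exactly one vertex from each pair $\{s_i,\overline{s_i}\}$. Fix an optimal superstring $T$. For each $i$, choose the orientation of $s_i$ that occurs in $T$, and order the chosen strings by their occurrence in $T$. Closing this order into a single cycle gives a constrained cycle cover of weight at most $\mathrm{OPT}$. Hence the minimum constrained cycle cover weight $\mathrm{CC}^*$ satisfies $\mathrm{CC}^*\le \mathrm{OPT}$. A single cycle is valid whenever $n\ge 2$; the case $n=1$ is trivial.

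\textbf{Step 1: computing $\mathrm{CC}^*$ exactly.} For each pair $\{s_i,\overline{s_i}\}$ we build a gadget in an undirected graph $G$. It has an out-copy and an in-copy for each of the two orientations, $s_i^{\mathrm{out}},s_i^{\mathrm{in}},\overline{s_i}^{\mathrm{out}},\overline{s_i}^{\mathrm{in}}$. Two auxiliary vertices $a_i,b_i$ are added, with zero-weight edges $a_i s_i^{\mathrm{out}}$, $a_i\overline{s_i}^{\mathrm{out}}$, $b_i s_i^{\mathrm{in}}$, $b_i\overline{s_i}^{\mathrm{in}}$, together with a zero-weight edge $s_i^{\mathrm{out}}s_i^{\mathrm{in}}$ and a zero-weight edge $\overline{s_i}^{\mathrm{out}}\overline{s_i}^{\mathrm{in}}$. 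For $i\ne j$ and each arc $u\to v$ with $u\in\{s_i,\overline{s_i}\}$ and $v\in\{s_j,\overline{s_j}\}$, we add the edge $u^{\mathrm{out}}v^{\mathrm{in}}$ with weight $w(u,v)$. No edges are added within a pair, so self-loops are excluded.

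In a perfect matching, $a_i$ and $b_i$ each consume one copy. We want to show that the gadget forces the orientation that is not chosen to be covered entirely by its internal edge, while the chosen orientation has both its out-copy and its in-copy matched externally. If $a_i$ takes $s_i^{\mathrm{out}}$ and $b_i$ takes $\overline{s_i}^{\mathrm{in}}$, then $s_i^{\mathrm{in}}$ and $\overline{s_i}^{\mathrm{out}}$ would need external partners. I expect this mixed configuration to be the main obstacle. It has to be ruled out, or shown harmless, for example by adding parity or extra dummy vertices so that the only feasible local patterns are the two intended ones. My first attempt is to replace $a_i,b_i$ by a small gadget in the style of Tutte's $f$-factor reduction, in which covering one orientation's out-copy forces covering the same orientation's in-copy.

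Once the gadget is correct, perfect matchings correspond bijectively to constrained cycle covers of equal weight. The correspondence takes out/in edges as arcs: every chosen vertex has in-degree and out-degree $1$, and every unchosen vertex has degree $0$. A minimum-weight perfect matching is then computable in polynomial time by Edmonds' algorithm.

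\textbf{Step 2: from the cycle cover to a superstring.} We follow Breslauer et al. For each cycle $C$ we take its representative string and rotate it optimally, using the Overlap Rotation Lemma, so that opening $C$ costs at most $\frac{2}{3}|C|$ extra beyond $w(C)$ in the overlap argument. We then concatenate the resulting strings and argue that the total length is at most $\mathrm{CC}^*+\frac{5}{3}\mathrm{OPT}$.

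The key prerequisite is an RC version of the lemma stating that strings from distinct cycles overlap little: $\mathrm{ov}(x,y)<w(C_x)+w(C_y)$. The claim must hold for any orientation of the representatives of two cycles in an optimal constrained cover. The proof is an exchange argument: a large overlap would let us merge the two cycles into a cheaper constrained cover. This merge respects the constraint because the two cycles use disjoint pairs. If instead $x$ and $\overline{y}$ overlap heavily, we reverse-complement the whole cycle $C_y$, which is again a constrained cycle cover of the same weight, and then merge. This also covers the fact that an optimal superstring may contain representatives in either orientation. With this lemma, the standard lower bound $\mathrm{OPT}\ge\sum|r_C|-\sum \mathrm{ov}$ goes through, yielding total length at most $\frac{8}{3}\mathrm{OPT}$.
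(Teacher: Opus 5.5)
\textbf{Step 1 (the central gap).} Your Step 1 does not yet give an algorithm for the constrained cycle cover, and this is the core of the theorem.

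The gadget as written contradicts your own description. If both copies of the chosen orientation are matched externally, then $a_i$ and $b_i$ are forced onto $\overline{s_i}^{\mathrm{out}}$ and $\overline{s_i}^{\mathrm{in}}$, so the internal edges never occur in an intended pattern. Their only effect is to admit the pattern $a_i s_i^{\mathrm{out}}$, $b_i s_i^{\mathrm{in}}$, $\overline{s_i}^{\mathrm{out}}\overline{s_i}^{\mathrm{in}}$, in which pair $i$ receives no arc at all. With distance weights, the matching that does this for every pair is a perfect matching of weight $0$. If you drop the internal edges, the mixed patterns you noticed remain.

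The Tutte-style repair cannot succeed. For any matching gadget attached to the four copies, the family of copy sets it can leave to be matched externally is a $\Delta$-matroid (Bouchet). The family $\{\{s_i^{\mathrm{out}},s_i^{\mathrm{in}}\},\{\overline{s_i}^{\mathrm{out}},\overline{s_i}^{\mathrm{in}}\}\}$ violates symmetric exchange. Toggling $s_i^{\mathrm{out}}$ and one other copy in $\{s_i^{\mathrm{out}},s_i^{\mathrm{in}}\}$ gives $\emptyset$, $\{s_i^{\mathrm{in}},\overline{s_i}^{\mathrm{out}}\}$ or $\{s_i^{\mathrm{in}},\overline{s_i}^{\mathrm{in}}\}$, and none of these is allowed.

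The missing idea is global. It uses the symmetry $\op{ov}(s,t)=\op{ov}(\op{rc}(t),\op{rc}(s))$, which your Step 1 never exploits.
\begin{itemize}
\item The paper keeps all $2n$ oriented copies and imposes only a weaker constraint: for each oriented copy $u$, exactly one of $u_{\mathrm{out}}$, $\bar u_{\mathrm{in}}$ is matched. An auxiliary vertex $x_u$ adjacent to both enforces this.
\item It takes the non-auxiliary part $F$ of a maximum-weight perfect matching under overlap weights. It then adds the image $\bar F$ under $(u_{\mathrm{out}},v_{\mathrm{in}})\mapsto(\bar v_{\mathrm{out}},\bar u_{\mathrm{in}})$.
\item $F\cup\bar F$ is then a cycle cover of all oriented copies, closed under reverse complement, of weight $2w(F)$.
\item Since there are no arcs $v\to\bar v$, no cycle contains both $u$ and $\bar u$. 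Otherwise the involution would act on that cycle as a reflection, and would therefore fix a vertex or reverse an arc $v\to\bar v$.
\item So the cycles come in pairs $\{c,\bar c\}$ of equal weight. Keeping one cycle from each pair gives a constrained cover of weight $w(F)$.
\item This is optimal, because every constrained cover extends to a perfect matching of the gadget graph.
\end{itemize}

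\textbf{Step 2.} Step 2 is also insufficient as sketched. The rotation lemma does not bound the cost of opening a cycle by $\frac23 w(C)$. In fact $|t_C|$ is at least $w(C)$ plus an overlap between two strings of $C$, and that overlap is not bounded in terms of $w(C)$. What the lemma bounds is overlaps between rotations of distinct cycles, and only overlaps into the critical rotation from a string of no larger period. Concatenating the rotated representatives and using $\op{OPT}\ge\sum|r_C|-\sum\op{ov}$ with $\op{ov}(x,y)<w(C_x)+w(C_y)$ only reproduces Blum et al.'s ratio $4$.

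The $\frac83$ analysis instead proceeds as follows.
\begin{itemize}
\item It computes a second constrained cycle cover on $T=\{t_c\}$, again without self-loops, so that every lost overlap is between rotations of distinct cycles.
\item It uses $w(\op{CYC}(G_{\op{dist}}(T)))\le\op{OPT}(T)\le\op{OPT}(S)+\mathrm{CC}^*$.
\item It breaks each second-level cycle at a vertex of maximum period. It deletes the incoming arc if that vertex appears as $t_c$, and the outgoing arc if it appears as $\op{rc}(t_c)$; the second case uses $\op{ov}(\op{rc}(t_c),y)=\op{ov}(\op{rc}(y),t_c)$.
\item Each second-level cycle then loses at most $\frac23$ of the total weight of its first-level cycles, giving length at most $\op{OPT}(S)+\frac53\mathrm{CC}^*$.
\end{itemize}
The prerequisite is not the weak overlap bound. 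It is the inequivalence of $t_c$ with both $t_d$ and $\op{rc}(t_d)$ for distinct cycles $c,d$. Your exchange argument (reverse-complement one cycle, then merge) is the right way to prove that.
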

On the hardness side, we show that approximation hardness for SCS transfers to SCS-RC.
\begin{restatable}{theorem}{SCStoSCSRCreduction}
\label{thm: SCS -> SCS-RC reduction}
If SCS-RC can be approximated within a factor of $\beta$, then SCS can also be approximated within a factor of $\beta$.
\end{restatable}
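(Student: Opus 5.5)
We give an approximation-preserving reduction from SCS to SCS-RC. Given an SCS instance $S=\{s_1,\dots,s_n\}$ over an alphabet $\Sigma$, we build an SCS-RC instance $S'$ whose optimum equals that of $S$ (up to an exactly computable transformation). From any superstring for $S'$ we then recover, in polynomial time, a superstring for $S$ that is no longer. A $\beta$-approximation for SCS-RC applied to $S'$ then yields a $\beta$-approximation for SCS.

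\textbf{Step 1: the encoding.} Because the hardness result should hold even over the DNA alphabet, I would not simply enlarge the alphabet with new symbols whose complements never occur. Instead I would encode each symbol so that reverse complements of encoded strings cannot usefully overlap encoded strings. Concretely, choose an injective morphism $h$ from $\Sigma$ into codewords over $\{A,C,G,T\}$ that use only $A$ and $C$, for example $h(a_i)=A\,C^{i}\,A$ style blocks of a fixed length; for a binary source alphabet this can be made length-uniform. Then every $h(s_i)$ lies in $\{A,C\}^*$, while every reverse complement $\overline{h(s_i)}$ lies in $\{T,G\}^*$. Set $S'=\{h(s_1),\dots,h(s_n)\}$. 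If the inapproximability of $333/332$ is already over a small alphabet, the simplest variant suffices: the complement of a string over $\{A,C\}$ is over $\{G,T\}$. Then even the trivial map $h$, sending symbols $0,1$ to $A,C$, works, and the reduction becomes nearly length-exact.

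\textbf{Step 2: an optimal solution transfers.} Any superstring $w$ of $S$ gives $h(w)$, a superstring of $S'$, and with identity-like $h$ we have $|h(w)|=|w|$. Hence $\mathrm{OPT}(S')\le \mathrm{OPT}(S)$.

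\textbf{Step 3: converting an approximate solution back.} Let $u$ be any SCS-RC solution for $S'$. For each $i$, either $h(s_i)$, which lies in $\{A,C\}^*$, or $\overline{h(s_i)}$, which lies in $\{G,T\}^*$, occurs in $u$. Split $u$ into maximal runs over $\{A,C\}$ and maximal runs over $\{G,T\}$, deleting any other characters. Each required occurrence lies entirely inside one run.

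Reverse-complement every $\{G,T\}$ run, which maps it into $\{A,C\}^*$. Each such run then contains $h(s_i)$ directly. Concatenate all the runs (original $\{A,C\}$ runs and flipped $\{G,T\}$ runs) in any order. The resulting string over $\{A,C\}$ contains every $h(s_i)$ and has length at most $|u|$. Applying $h^{-1}$ gives a superstring $w$ of $S$ with $|w|\le |u|$.

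\textbf{Conclusion.} With Step 2, $|w|\le|u|\le\beta\,\mathrm{OPT}(S')\le\beta\,\mathrm{OPT}(S)$, which proves the theorem.

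\textbf{Main obstacle.} The delicate point is the choice of complement pairing. The reduction needs the two halves of the alphabet, $\{A,C\}$ and $\{G,T\}$, to be swapped by complementation. This holds because $A\leftrightarrow T$ and $C\leftrightarrow G$.

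For the DNA-alphabet claim, the source SCS hardness must hold over a binary alphabet. If the $333/332$ bound is stated over a larger alphabet, one first applies a binary encoding. A standard encoding that preserves the ratio is $a_i\mapsto 0\,1^i\,0$-type blocks of equal length, which preserves overlaps only within block boundaries. I would check that the cited bound is already binary, and otherwise verify that this encoding does not weaken the ratio.
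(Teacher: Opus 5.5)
\textbf{Verdict: there is a genuine gap.} Your proof is complete only when the SCS alphabet has at most two letters, while the theorem is about SCS over an arbitrary alphabet.

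Your Step~3 is sound: you split $u$ into maximal $\{\texttt{A},\texttt{C}\}$-runs and $\{\texttt{G},\texttt{T}\}$-runs, reverse-complement the latter, and concatenate. When the source alphabet has at most two letters and $h$ is a letter renaming, your proof is complete; it is in fact the paper's construction with $k=2$. The gap is the general case. The theorem quantifies over every SCS instance and every $\beta$, so checking whether the $333/332$ bound is already binary does not prove it.

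For $|\Sigma|=k>2$ there is no letter-to-letter embedding into $\{\texttt{A},\texttt{C}\}$, so $h$ must be a block code of length $L>1$. Then Steps~2 and~3 no longer fit together:
\begin{itemize}
\item Step~2 gives only $\mathrm{OPT}(S')\le L\cdot\mathrm{OPT}(S)$, so Step~3 would have to return $w$ with $|w|\le |u|/L$.
\item The concatenation of runs in arbitrary order is generally not in $h(\Sigma^*)$, since runs may also start or end mid-codeword. So $h^{-1}$ is undefined on it.
\end{itemize}
To repair this you need a length-uniform code whose codewords have no nontrivial overlaps with one another or with themselves. You also need an argument that all overlaps between encoded strings are then block-aligned. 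The blocks you propose, $\texttt{A}\texttt{C}^i\texttt{A}$ or $0\,1^i\,0$, fail this. As written they are not of uniform length, and they start and end with the same letter, so any two of them overlap by one character. These spurious, non-aligned overlaps break the exact correspondence between the two optima.

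The simplest fix is the paper's: do not insist on DNA in this theorem. Adjoin a disjoint copy $\bar\Sigma$ of $\Sigma$, let $cm$ swap each $a$ with $\bar a$, and keep $S'=S$. Step~2 is then trivial, and your Step~3 goes through verbatim with $\Sigma$ and $\bar\Sigma$ in place of $\{\texttt{A},\texttt{C}\}$ and $\{\texttt{G},\texttt{T}\}$. Your run decomposition is a slightly cleaner version of the paper's argument. The paper instead writes the solution as $\langle s'_{i_1},\dots,s'_{i_m}\rangle$ and reverse-complements maximal blocks of flipped strings. The DNA restriction is a separate reduction (the paper's next theorem). That is where the overlap-free, uniform-length code and the block-alignment argument are needed, for example blocks $\texttt{A}^i(\texttt{A}\texttt{G})^{k+1-i-j}\texttt{G}^i$ chosen so that $\mathrm{rc}(h(a))=h(cm(a))$.
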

We further show that this hardness result persists even when the alphabet size is restricted to four, corresponding to the DNA alphabet
$\Sigma_{\mathrm{DNA}} = \{\texttt{A}, \texttt{C}, \texttt{G}, \texttt{T}\}$, where (\texttt{A}, \texttt{T}) and (\texttt{C}, \texttt{G}) are complementary pairs.
\begin{restatable}{theorem}{ToQuaternaryReduction}
\label{thm: arbitrary -> quaternary alphabet reduction}
If SCS-RC can be approximated within a factor of $\beta$ on instances over the DNA alphabet, then SCS-RC can also be approximated within a factor of $\beta$ on instances over an arbitrary alphabet.    
\end{restatable}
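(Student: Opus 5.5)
Given an SCS-RC instance $S=\{s_1,\dots,s_n\}$ over an arbitrary alphabet $\Sigma$ with involutive complement map $c$, we construct in polynomial time an instance $S'$ over $\Sigma_{\mathrm{DNA}}$ with $\mathrm{OPT}(S') = L\cdot \mathrm{OPT}(S)$ for a fixed block length $L$. We also give a polynomial-time decoding that maps any solution $w'$ for $S'$ to a solution $w$ for $S$ with $|w| \le |w'|/L$. Running the assumed $\beta$-approximation on $S'$ and decoding then yields $|w| \le \beta\,\mathrm{OPT}(S)$.

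The encoding is a block code. Each letter $a\in\Sigma$ is mapped to a codeword $h(a)\in\Sigma_{\mathrm{DNA}}^L$, with $L = \Theta(\log|\Sigma|)$, chosen to satisfy three properties.
\begin{itemize}
\item[(i)] Complement commutes with the code: $\overline{h(a)}^{R} = h(c(a))$, where the left side is the reverse complement in the DNA sense. We arrange this by pairing letters with their complements and defining $h(c(a))$ as the DNA reverse complement of $h(a)$. Self-complementary letters would need a DNA-palindromic codeword; alternatively, we first reduce to the case of no self-complementary letters.
\item[(ii)] Synchronization: every codeword starts with a marker such as $\texttt{A}^k\texttt{C}$, and the marker occurs nowhere else. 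Consequently, an occurrence of a concatenation of two or more codewords inside any DNA string can only appear at block-aligned positions relative to other codeword occurrences.
\item[(iii)] Distinct letters receive distinct codewords.
\end{itemize}
Property (ii) must hold simultaneously for codewords and their reverse complements, since the reverse complement of the marker $\texttt{A}^k\texttt{C}$ is $\texttt{G}\texttt{T}^k$. So we use a marker that stays unambiguous under reversal, for example $\texttt{A}^k\texttt{C}$ at the start and $\texttt{G}\texttt{T}^k$ at the end of each block, with interiors over a restricted pattern that avoids $\texttt{A}^k$ and $\texttt{T}^k$.

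We then set $S'=\{h(s_i)\}$, where $h$ is extended morphically. By (i), $\overline{h(s)}^R = h(\overline{s}^R)$. This gives the upper bound: encoding an optimal solution for $S$ yields a solution for $S'$ of length $L\cdot\mathrm{OPT}(S)$.

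Decoding is the main obstacle. A solution $w'$ for $S'$ need not be a concatenation of codewords, and occurrences of different $h(s_i)$ (or of their reverse complements) may sit at non-aligned offsets. We first normalize: pad input strings with a boundary so that every $s_i$ has length at least $2$. Then each required occurrence in $w'$ contains a full marker, and by (ii) all occurrences are aligned to a common block grid. We cut $w'$ into grid blocks. Every occurrence covers complete blocks that are valid codewords. Blocks that lie in the gaps between occurrences are deleted; this can only shorten the string. Blocks that are covered by occurrences are decoded via $h^{-1}$; overlapping occurrences agree on shared blocks because decoding is well defined. The result is a string $w$ with $|w|\le |w'|/L$ that contains each $s_i$ or $\overline{s_i}^R$. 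The part of this step that needs a careful argument is that (ii) forces a single common grid even when one occurrence is forward and another is reverse-complemented. Symmetric start and end markers make this work, because the reverse complement of an encoded string is again an encoded string, so it carries the same markers.
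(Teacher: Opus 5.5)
Your proposal is correct. It uses the same block-code strategy as the paper, but both the code and the decoding differ.

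\textbf{Comparison with the paper.} The paper's code has length $2(k+1-j)$, linear in the alphabet size; for the first class the codewords are $\texttt{A}^i(\texttt{AG})^{k+1-i-j}\texttt{G}^i$. It is built so that no two codewords, and no codeword with itself, have a nontrivial overlap. This forces every overlap between encoded strings to be block-aligned, so $\op{ov}(h(s),h(t))=L\cdot\op{ov}(s,t)$. The paper then decodes through the merge normal form $r=\langle h(s'_{i_1}),\dots,h(s'_{i_m})\rangle$ and gets the exact identity $|r|=L\,|s|$.

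You instead use a synchronizing code $\texttt{A}^k\texttt{C}\,X\,\texttt{G}\texttt{T}^k$ of logarithmic length, and you decode an arbitrary solution string directly:
\begin{itemize}
\item Occurrences that share a position are aligned. A long non-aligned overlap would place $\texttt{A}^k\texttt{C}$ off the grid. A short one would make some $\texttt{A}^r$ with $1\le r\le k$ a suffix of a string ending in $\texttt{T}$.
\item Each connected component of occurrences decodes consistently by (iii).
\item Deleting the gaps gives $|w|\le|w'|/L$, which together with $\op{OPT}(S')\le L\cdot\op{OPT}(S)$ suffices.
\end{itemize}
Your route gives shorter codewords and does not rely on the merge normal form. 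The paper's route has a fully explicit code whose self-complementary codewords are reverse-complement palindromes by construction. Your scheme also handles self-complementary letters directly, so no preliminary reduction is needed: take $X\in\{\texttt{C},\texttt{G}\}^m$ with $m$ even and $X=\op{rc}(X)$, e.g.\ $X=(\texttt{CG})^{m/2}$.

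\textbf{Points to clean up.}
\begin{itemize}
\item Drop the padding step. Every single codeword already begins with $\texttt{A}^k\texttt{C}$ and ends with $\texttt{G}\texttt{T}^k$, so requiring length $\ge 2$ buys nothing. Worse, actually modifying the input strings would change $\op{OPT}(S)$ and destroy the exact factor $\beta$. What the alignment argument really needs for short overlaps is the end marker, as described above.
\item There is no single common grid for all of $w'$, because gaps between non-overlapping occurrences can have arbitrary length. Define the grid per connected component instead; this costs nothing, since the gaps are deleted anyway.
\item Note that the same alignment argument shows $S'\cup\op{rc}(S')$ is substring-free, so $S'$ is a legal instance.
\end{itemize}
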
 
By combining \cref{thm: SCS -> SCS-RC reduction,thm: arbitrary -> quaternary alphabet reduction} with the known hardness result that it is NP-hard to approximate SCS within a factor better than $\frac{333}{332}$ \cite{333/332.inapprox.CATS2013}, we obtain \cref{corollary: hardness result}.

\begin{corollary} \label{corollary: hardness result}
It is NP-hard to approximate SCS-RC within a factor better than $\frac{333}{332}$. This holds even for the DNA alphabet.
\end{corollary}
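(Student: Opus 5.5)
The corollary follows by chaining the two reductions with the known hardness of SCS. I will argue by contraposition. Fix any constant $\beta < \frac{333}{332}$ and suppose there is a polynomial-time $\beta$-approximation algorithm for SCS-RC restricted to instances over $\Sigma_{\mathrm{DNA}}$. By \cref{thm: arbitrary -> quaternary alphabet reduction}, this yields a polynomial-time $\beta$-approximation algorithm for SCS-RC over an arbitrary alphabet (with an arbitrary complement involution on it). By \cref{thm: SCS -> SCS-RC reduction}, this in turn yields a polynomial-time $\beta$-approximation algorithm for SCS over an arbitrary alphabet. Karpinski and Schmied~\cite{333/332.inapprox.CATS2013} showed that such an algorithm cannot exist unless P $=$ NP. Hence approximating SCS-RC within $\beta$ is NP-hard, even for the DNA alphabet. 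Since every instance over the DNA alphabet is also a general instance, the general statement follows as well.

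Two points need checking for the composition to be valid. First, both reductions must run in polynomial time and must preserve the ratio exactly, with no additive loss. Otherwise composing them could degrade $\beta$ to $\beta + o(1)$, and the argument would only work strictly below $\frac{333}{332}$ after adjusting constants. Even in that case the conclusion "better than $\frac{333}{332}$" still holds, because we quantify over all $\beta < \frac{333}{332}$. Second, the hardness of \cite{333/332.inapprox.CATS2013} is stated for SCS over some fixed alphabet. \Cref{thm: SCS -> SCS-RC reduction} must therefore accept SCS instances over that alphabet. It does, since it is stated for general SCS.

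The main obstacle is essentially only bookkeeping: confirming that the hardness of \cite{333/332.inapprox.CATS2013} is a statement about polynomial-time algorithms with multiplicative ratio, which matches the form of the two theorems. I therefore expect the proof to be a few lines long.
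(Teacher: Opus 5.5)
Your proposal is correct and matches the paper's argument, which likewise obtains the corollary by chaining \cref{thm: arbitrary -> quaternary alphabet reduction} and \cref{thm: SCS -> SCS-RC reduction} with the $\frac{333}{332}$ hardness of SCS from \cite{333/332.inapprox.CATS2013}. Your concern about additive loss does not arise: the first reduction preserves $\op{OPT}$ and never lengthens a solution, and the second scales every length by the same factor $2(k+1-j)$, so $\beta$ carries through exactly.
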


\section{Preliminaries}
Throughout this paper, unless stated otherwise,
we consider strings over a finite alphabet $\Sigma$
equipped with an involutive complement mapping
$cm \colon \Sigma \to \Sigma$,
that is,
$cm(cm(x)) = x$ for every $x \in \Sigma$.
A prominent example is the DNA alphabet
$\Sigma_{\mathrm{DNA}} = \{\texttt{A}, \texttt{C}, \texttt{G}, \texttt{T}\}$,
where the complement mapping is given by
$cm(\texttt{A}) = \texttt{T}$ and $cm(\texttt{C}) = \texttt{G}$,
reflecting the complementarity of DNA nucleotides.

For a string $s = s_1 s_2 \cdots s_n \in \Sigma^n$,
its \emph{reverse complement}, denoted by $\mathrm{rc}(s)$, is defined as
\[
\mathrm{rc}(s) = cm(s_n) cm(s_{n-1}) \cdots cm(s_1).
\]
This general setting allows us to capture not only
the standard reverse-complement operation on DNA,
but also the special case where
$cm(x) = x$ for all $x \in \Sigma$,
in which case $\mathrm{rc}(s)$ coincides with the usual reversal of $s$.

For a set of strings $T$, we define
$\mathrm{rc}(T) = \{\mathrm{rc}(t) \mid t \in T\}$.
Without loss of generality, we assume that the input set $S$ satisfies that
$S \cup \mathrm{rc}(S)$ is \emph{substring-free}, that is,
no string in $S \cup \mathrm{rc}(S)$ is a substring of another string in the same set. For a string $x$, we write $x' \in \{x, \mathrm{rc}(x)\}$ to represent
$x$ or its reverse complement $\mathrm{rc}(x)$.
The choice of $x'$ depends on the particular solution under consideration.
Let us formally define the
\emph{Shortest Common Superstring with Reverse Complements (SCS-RC)} problem.

\begin{definition}[Shortest Common Superstring with Reverse Complements (SCS-RC)]
\leavevmode\\
\textbf{Input:} A set of strings $S = \{s_1, \dots, s_m\}$ over $\Sigma$ such that
$S \cup \mathrm{rc}(S)$ is substring-free.\\
\textbf{Output:} A shortest string $s$ such that for every $s_i \in S$,
either $s_i$ or its reverse complement $\mathrm{rc}(s_i)$ appears as a substring of $s$.
\end{definition}

\subsection{The Distance Graph and Cycle Cover} \label{sec: distance graph and cycle cover}
We follow the standard notions of \emph{overlap}, \emph{prefix}, and \emph{distance}
as introduced in
\cite{Blum.et.al}.
For two strings $s$ and $t$, let $y$ be the longest string such that
$s = xy$ and $t = yz$ for some non-empty strings $x$ and $z$.
The length of $y$ is called the \emph{overlap} of $s$ and $t$ and is denoted by
$\op{ov}(s,t)$.
The string $x$ is called the \emph{prefix} of $s$ with respect to $t$ and is denoted by
$\op{pref}(s,t)$.
The length of $x$ is called the \emph{distance} from $s$ to $t$ and is denoted by
$\op{dist}(s,t)$.

For an ordered sequence of strings $x_1, \dots, x_r$,
we define
\[
\langle x_1, \dots, x_r \rangle
= \op{pref}(x_1, x_2)\op{pref}(x_2, x_3)
  \cdots
  \op{pref}(x_{r-1},x_r)x_r.
\]
Since $S \cup \op{rc}(S)$ is substring-free,
there exists an optimal solution to the SCS-RC problem of the form
$s = \langle s_{q_1}', \dots, s_{q_m}' \rangle$,
where $(q_1, \dots, q_m)$ is a permutation of $\{1, \dots, m\}$,
as observed in \cite{Blum.et.al} and extended to the reverse-complement setting in \cite{JIANG1992195}.
Let $\op{OPT}(S)$ denote the length of an optimal solution.

We consider the \emph{distance graph} induced by the input set of strings $S$
and denote it by $G_{\op{dist}}(S)$.
Formally, $G_{\op{dist}}(S) = (V, E, w)$ is a directed weighted graph defined as follows.
The vertex set is $V = S \cup \mathrm{rc}(S)$, and the edge set is
\[
E = (V \times V)
    \setminus 
    \bigcup_{x \in V} \{(x, x), (x, \op{rc}(x))\}.
\]
Each edge $(x,y) \in E$ is assigned weight $w(x,y) = \op{dist}(x,y)$.

Thus, $G_{\op{dist}}(S)$ contains neither self-loops nor edges between a string
and its reverse complement.
This definition differs from existing distance graph constructions
that take reverse complements into account
\cite{JIANG1992195, YamanoShibuya2026SCSRC},
in which self-loops are allowed.
By replacing the distance function with the overlap function,
we obtain the \emph{overlap graph} $G_{\op{ov}}(S)$.

For a cycle $c$ in a distance graph, let $w(c)$ denote the weight of the cycle, defined as the sum of the weights of its edges.

Given an instance $S$, consider the distance graph $G_{\op{dist}}(S)$.
Let $\op{GTSP}(G_{\op{dist}}(S))$ denote the minimum weight of a directed cycle that contains exactly one vertex from each pair $\{s_i, \mathrm{rc}(s_i)\}$ for $1 \le i \le m$.
This can be viewed as a special case of the generalized traveling salesman problem (GTSP) \cite{noon1988generalized}, where the clusters are $\{\{s_i, \mathrm{rc}(s_i)\} \mid 1 \le i \le m\}$.
Then,
\[
\op{GTSP}(G_{\op{dist}}(S)) \le \op{OPT}(S),
\]
and thus provides a lower bound for the SCS-RC problem.
Since GTSP is NP-hard \cite{noon1988generalized}, we consider a relaxation of this formulation, namely the \emph{cycle cover} problem.

A \emph{cycle cover} of $G_{\op{dist}}(S)$ is a collection of vertex-disjoint directed cycles that contains exactly one vertex from each pair $\{s_i, \mathrm{rc}(s_i)\}$ for $1 \le i \le m$. 
This is a constrained variant of the classical cycle cover problem. 
We denote by $\op{CYC}(G_{\op{dist}}(S))$ a minimum-weight cycle cover, and by $w(\op{CYC}(G_{\op{dist}}(S)))$ its total weight. 
Clearly,
\[
w(\op{CYC}(G_{\op{dist}}(S))) \le \op{GTSP}(G_{\op{dist}}(S)) \le \op{OPT}(S).
\]

A similar notion applies to the overlap graph $G_{\op{ov}}(S)$.
In this case, one considers a maximum-weight cycle cover, which
corresponds exactly to the minimum-weight cycle cover in the
distance graph $G_{\op{dist}}(S)$. This correspondence follows from
the relation
$\op{ov}(x, y) = |x| - \op{dist}(x, y),$
which holds for any edge from $x$ to $y$ in the graph.

\subsection{Periodicity of Strings}
We follow the definitions in \cite{Breslaure.1997.OverlapRotationLemma}.
A string $x$ is a \emph{factor} of a finite string $s$ if
$s = x^{i} y$ for some integer $i \ge 1$ and some (possibly empty) prefix $y$ of $x$.
Let $\op{factor}(s)$ be the shortest such $x$, and define
$\op{period}(s) = |\op{factor}(s)|$.
A \emph{periodic semi-infinite} string $s$ is an infinite string satisfying
$s = x s$ for some nonempty string $x$.
The shortest such $x$ is denoted by $\op{factor}(s)$, and
$\op{period}(s) = |\op{factor}(s)|$.
Let $s$ and $t$ be strings, each of which is either finite or periodic semi-infinite.
We say that $s$ and $t$ are \emph{equivalent} if $\op{factor}(s)$ and $\op{factor}(t)$ are cyclic shifts of each other; that is, there exist strings $a$ and $b$ such that
$\op{factor}(s) = ab$ and $\op{factor}(t) = ba$.
Otherwise, $s$ and $t$ are said to be \emph{inequivalent}.
This defines an equivalence relation on such strings.
In particular, equivalent strings have the same period.

\section{Useful Lemmas from Previous Work}
\cref{lm: cycle weight equals period} relates the periodicity of overlapped strings to cycles in the distance graph.
This result was proved in \cite{Blum.et.al} and restated in \cite{Breslaure.1997.OverlapRotationLemma}.

\begin{lemma}[\cite{Blum.et.al}] \label{lm: cycle weight equals period}
Let
$c = (s_{i_1}', \dots, s_{i_r}', s_{i_1}')$
be a cycle in $\op{CYC}(G_{\op{dist}}(S))$.
Then
\[
w(c)
=
\op{dist}(s_{i_1}', s_{i_2}') + \cdots + \op{dist}(s_{i_{r-1}}', s_{i_r}') + \op{dist}(s_{i_r}', s_{i_1}')
=
\op{period}(\langle s_{i_1}', \dots, s_{i_r}' \rangle).
\]
Moreover, the strings
$\langle s_{i_1}', \dots, s_{i_r}' \rangle,
\langle s_{i_2}', \dots, s_{i_r}', s_{i_1}' \rangle,
\dots,
\langle s_{i_r}', s_{i_1}', \dots, s_{i_{r-1}}' \rangle$
are all equivalent.
\end{lemma}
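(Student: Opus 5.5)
The approach is to reproduce the classical argument of Blum et al.\ directly: the cycle weight equals the period of the merged string. Only the combinatorics of $\langle\cdot\rangle$ and the definition of $\op{dist}$ are used. The cycle-cover constraints matter only insofar as the strings $s_{i_1}', \dots, s_{i_r}'$ are pairwise distinct and none is a substring of another. This holds because $S \cup \op{rc}(S)$ is substring-free and the graph has no self-loops.

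Write $x_j = s_{i_j}'$ with indices taken mod $r$, and let $p_j = \op{pref}(x_j, x_{j+1})$. Set $u = p_1 p_2 \cdots p_r$, so that $|u| = w(c)$.

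\textbf{Step 1.} Show that $\langle x_1, \dots, x_r \rangle$ is a prefix of the semi-infinite string $u^{\infty}$. More strongly, show that $x_j$ occurs in $u^\infty$ starting at offset $|p_1 \cdots p_{j-1}|$ for every $j$, including the wrap-around $x_{r+1} = x_1$ at offset $|u|$. The argument is by induction along the cycle. Since $x_j = p_j y_j$ with $y_j$ a prefix of $x_{j+1}$, consecutive strings glue consistently. Substring-freeness guarantees that $p_j$ is nonempty and that each $x_{j+1}$ extends past the end of $x_j$. The occurrences of $x_r$ followed by $x_1$ shifted by $|u|$ then show that the merged string has period $|u|$. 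Consequently $\op{period}(\langle x_1,\dots,x_r\rangle) \le w(c)$, and every rotation $\langle x_j, \dots, x_{j-1}\rangle$ is a prefix of the corresponding rotation of $u^\infty$.

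\textbf{Step 2 (main obstacle).} Prove the reverse inequality $\op{period} \ge w(c)$. First I would check that the merged string has length at least $|u|$; this holds since it equals $u$ followed by the leftover $x_r$ minus $p_r$. The plan is to use minimality of the cycle: $c$ belongs to a \emph{minimum-weight} cycle cover. Suppose the factor $z$ of $\langle x_1,\dots,x_r\rangle$ had $|z| < |u|$. Then I would argue, in the style of Blum et al.'s proof, that every $x_j$ occurs in $z^\infty$. Reordering the $x_j$ by their starting positions modulo $|z|$ yields a cyclic arrangement of the same strings whose total distance is at most $|z|$. This arrangement is again a valid cycle on the same vertex set: it has no self-loops, since the strings are distinct, and it contains no edge to a reverse complement, since only one string of each pair appears. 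Replacing $c$ by this cycle would strictly decrease the cover weight, a contradiction. A subtlety is that $\op{period}$ here is defined through the factor of a finite string. Because the length of the merged string is at least $|u| > |z|$, the periodicity extends to $z^\infty$, and Fine--Wilf-type issues do not arise.

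\textbf{Step 3.} Derive equivalence. Each rotation $\langle x_j,\dots,x_{j-1}\rangle$ is a prefix of the rotated semi-infinite string $(p_j\cdots p_r p_1 \cdots p_{j-1})^\infty$, which is a cyclic shift of $u^\infty$. By Steps 1 and 2 applied to the rotated cycle, which has the same weight, its period is $|u|$. Hence each factor is a cyclic shift of $u$'s primitive root of length $|u|$, so all the rotations are equivalent.
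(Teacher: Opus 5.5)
Your proposal is correct and is essentially the classical argument of Blum et al.\ that the paper invokes by citation without writing it out. The gluing $\langle x_1,\dots,x_r,x_1\rangle = u\,x_1$ gives a period of at most $w(c)$. Minimality of the cover rules out a shorter period: re-sort the strings by their offsets modulo $|z|$ in $z^\infty$, where substring-freeness forces distinct offsets. Equivalence follows because the factors of the rotations are the cyclic shifts $p_j\cdots p_r p_1\cdots p_{j-1}$ of $u$.

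The one point the citation leaves implicit in the reverse-complement setting is that the re-sorted cycle is still an admissible cycle of $G_{\op{dist}}(S)$ on the same vertex set, and hence gives a cheaper constrained cycle cover. That is exactly the point you check, and you check it correctly.
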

The inequivalence of strings extracted from distinct cycles
of a minimum-weight cycle cover
was originally shown in \cite{Blum.et.al},
and later extended to the reverse-complement setting
in \cite{JIANG1992195,YamanoShibuya2026SCSRC}.

\begin{lemma}[\cite{Blum.et.al,JIANG1992195}] \label{lm: distinct cycles are inequivalent}
Let
$c = (s_{i_1}', \dots, s_{i_r}', s_{i_1}')$
and
$d = (s_{j_1}', \dots, s_{j_k}', s_{j_1}')$
be two distinct cycles in
$\op{CYC}(G_{\op{dist}}(S))$.
Define
$e = \langle s_{i_1}', \dots, s_{i_r}' \rangle$
and
$f = \langle s_{j_1}', \dots, s_{j_k}' \rangle$.
Then the four pairs
$
(e, f),\ (e, \op{rc}(f)),\ (\op{rc}(e), f),\ (\op{rc}(e), \op{rc}(f))
$
are all inequivalent.
\end{lemma}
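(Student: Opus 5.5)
We argue by contradiction with the minimality of $\op{CYC}(G_{\op{dist}}(S))$. The idea follows Blum et al.: if the strings of two cycles share a common period structure, the two cycles can be merged into one cheaper cycle. The reverse-complement cases are reduced to the plain case by reversing one cycle.

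\textbf{Reducing the four pairs to one.} First I reduce to the pair $(e,f)$. For any strings, $\op{ov}(x,y)=\op{ov}(\op{rc}(y),\op{rc}(x))$ and $|\op{rc}(x)|=|x|$, hence $\op{dist}(x,y)=\op{dist}(\op{rc}(y),\op{rc}(x))$. So replacing $d$ by its reversed complemented cycle
\[
\bar d=(\op{rc}(s_{j_k}'),\dots,\op{rc}(s_{j_1}'),\op{rc}(s_{j_k}'))
\]
gives another minimum-weight cycle cover. It still picks exactly one vertex per pair, has no self-loops and no edge $x\to\op{rc}(x)$. The string read off $\bar d$ is equivalent to $\op{rc}(f)$: the factor of $\op{rc}(f)$ is the reverse complement of a rotation of $\op{factor}(f)$, and rotating the starting vertex does not change the equivalence class by \cref{lm: cycle weight equals period}. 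Equivalence of two strings is also preserved when both are reverse complemented. Hence the cases $(e,\op{rc}(f))$, $(\op{rc}(e),f)$ and $(\op{rc}(e),\op{rc}(f))$ all follow from the case $(e,f)$ applied to a suitable minimum-weight cycle cover.

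\textbf{Placing both cycles on one periodic string.} Suppose $e$ and $f$ are equivalent. By \cref{lm: cycle weight equals period} both have the same period $p$, with $p=w(c)=w(d)$, and their factors are rotations of a common word $u$ with $|u|=p$. The construction $\langle\cdot\rangle$ shows that every $s_{i_\ell}'$ is a substring of $e$, and hence of the periodic semi-infinite string $u^\infty$. Likewise every $s_{j_\ell}'$ is a substring of $u^\infty$ after a suitable rotation. Assign each of the $r+k$ strings a starting offset in $[0,p)$ modulo $p$.

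No two strings share an offset. If they did, one would be a prefix of the other, contradicting the substring-freeness of $S\cup\op{rc}(S)$. Distinct strings here come from distinct pairs, so they cannot coincide.

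\textbf{Merging into a cheaper cycle.} Order all $r+k$ strings cyclically by offset. This yields a single cycle in $G_{\op{dist}}(S)$. It uses only vertices of $c$ and $d$, so it still picks exactly one vertex per pair. It has no self-loops, and it has no edges $x\to\op{rc}(x)$ because no pair is represented twice.

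For consecutive strings $x,y$ with offset gap $\delta$, I claim $\op{dist}(x,y)\le\delta$. If $\delta<|x|$, then $y$ begins inside $x$ in $u^\infty$. By substring-freeness $y$ also ends after $x$, so $\op{ov}(x,y)\ge|x|-\delta$. If $\delta\ge|x|$, then trivially $\op{dist}(x,y)\le|x|\le\delta$.

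The gaps around the circle sum to $p$. So the merged cycle has weight at most $p<2p=w(c)+w(d)$. Replacing $c$ and $d$ by this cycle therefore gives a strictly lighter cycle cover, contradicting minimality.

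\textbf{Main obstacle.} The delicate step is the per-edge bound $\op{dist}(x,y)\le\delta$. It needs care when the two strings are substrings of differently rotated copies of $u^\infty$ and when they are longer than $p$. I would handle this by fixing one infinite string $u^\infty$ and embedding every string at its smallest occurrence position. I would then check that choosing a different occurrence, which shifts by a multiple of $p$, does not affect the wrap-around gap.
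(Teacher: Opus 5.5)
Your proof is correct and is the standard merging argument of Blum et al., extended to reverse complements as in Jiang et al.; the paper does not prove this lemma itself and only cites those sources. The one slip is the edge-wise identity $\op{dist}(x,y)=\op{dist}(\op{rc}(y),\op{rc}(x))$, which fails when $|x|\neq|y|$ because the right side equals $|y|-\op{ov}(x,y)$. The length terms telescope around a cycle, so $w(\bar d)=w(d)$ still holds. Alternatively, you can bypass $\bar d$ entirely: when $e$ is equivalent to $\op{rc}(f)$, place the strings of $c$ together with $\op{rc}(s'_{j_1}),\dots,\op{rc}(s'_{j_k})$ directly on $u^\infty$ and merge them. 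This also avoids your loosely justified identification of the string read off $\bar d$ with $\op{rc}(f)$.
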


Given a semi-infinite string $\alpha = a_1a_2\cdots$, we denote by $\alpha[k] = a_k a_{k+1} \cdots$ its rotation starting at position $k$. 
Breslauer et al.\ \cite{Breslaure.1997.OverlapRotationLemma} proved the following overlap rotation lemma.

\begin{lemma}[Overlap Rotation Lemma \cite{Breslaure.1997.OverlapRotationLemma}] \label{lm: overlap rotation lemma}
Let $\alpha$ be a periodic semi-infinite string. 
Then there exists an integer $k$ such that, for any finite string $s$ that is inequivalent to $\alpha$,
\[
\text{if}~\op{period}(s) \le \op{period}(\alpha), \text{then}~\op{ov}(s, \alpha[k]) \le \frac{2}{3}(\op{period}(s) + \op{period}(\alpha)).
\]
\end{lemma}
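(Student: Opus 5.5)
The statement to prove is the Overlap Rotation Lemma (\cref{lm: overlap rotation lemma}). It is quoted from Breslauer et al., so the plan is to reconstruct their argument. Let $p = \op{period}(\alpha)$ and $x = \op{factor}(\alpha)$, and regard the rotations $\alpha[1], \dots, \alpha[p]$ as the semi-infinite strings $x_k^\infty$ over the cyclic shifts $x_k$ of $x$.

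\textbf{Reduction to a periodicity condition.} Suppose $s$ is finite, inequivalent to $\alpha$, and satisfies $q = \op{period}(s) \le p$. If $\op{ov}(s,\alpha[k]) \ge p+q$, then by the Fine--Wilf periodicity lemma the overlapping segment, which has both periods $p$ and $q$, has period $\gcd(p,q)$. This would force $s$ and $\alpha$ to be equivalent, a contradiction. Hence overlaps are always below $p+q$. The task is to choose a single $k$, independent of $s$, that pushes every such overlap down to $\frac{2}{3}(p+q)$.

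\textbf{Choice of $k$ via a Lyndon-type rotation.} I would take $k$ so that $x_k$ is the lexicographically minimal rotation of $x$, i.e.\ a Lyndon word, which is primitive and unbordered among its conjugates. I would then argue by contradiction. Suppose some $s$ has overlap $\ell > \frac{2}{3}(p+q)$ with $\alpha[k]$. The suffix $y$ of $s$ of length $\ell$ is then a prefix of $x_k^\infty$ with period $p$, and also has period $q$.
\begin{itemize}
\item Since $\ell > p$ and $\ell > \frac{4}{3}q$, $y$ contains a full copy of $x_k$.
\item Inside that copy, the period $q$ produces a repetition $u^{e}u'$ with $|u| = q$ and $e \ge 1$ covering roughly the first $\ell - p$ symbols. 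The quantitative slack here is $\ell - p > \frac{2}{3}q - \frac{1}{3}p$, and separately $\ell - q > \frac{2}{3}p - \frac{1}{3}q$.
\end{itemize}
Comparing the shift by $q$ against the Lyndon minimality of $x_k$, I would show that either $x_k$ is a proper power, which contradicts primitivity, or some rotation of $x$ is lexicographically smaller than $x_k$. The second option contradicts the choice of $k$ in the case $q < p$. The case $q = p$ is handled directly: it would make $s$ equivalent to $\alpha$.

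\textbf{Main obstacle.} The delicate step is getting exactly the constant $\frac{2}{3}$. Minimality alone does not obviously give it. I expect a case split according to whether $q \le p/2$ or $p/2 < q \le p$:
\begin{itemize}
\item When $q$ is small, the $q$-periodic block covers more than $p/3 + \dots$ positions. This gives two occurrences of a square inside $x_k^\infty$ and a contradiction through a lexicographic comparison between $u$ and its conjugate.
\item When $q$ is close to $p$, the overlap of length $>\frac{2}{3}(p+q)$ forces a border of length $> \frac{1}{3}(p+q) - (p-q)$ within $x_k x_k$, and unborderedness rules this out.
\end{itemize}
If the Lyndon rotation fails in one subcase, the fallback is Breslauer et al.'s actual choice: take $k$ at the start of a longest run with some critical-factorization property, i.e.\ the Critical Factorization Theorem. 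This makes the factorization point local-period-maximal, and the $\frac{2}{3}$ bound then follows from the local period being $p$ on both sides of position $k$.
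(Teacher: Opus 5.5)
The paper does not prove this lemma; it imports it from Breslauer et al. Your reconstruction has a genuine gap: the lexicographically least (Lyndon) rotation does \emph{not} satisfy the lemma.

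Take $\alpha=(\texttt{aaaaab})^\infty$, so $p=6$ and the Lyndon rotation is $\alpha[k]=\texttt{aaaaab}\,\texttt{aaaaab}\cdots$. Let $s=\texttt{aaaaaa}$, which has $q=1$ and is inequivalent to $\alpha$. Then $\op{ov}(s,\alpha[k])=5>\frac{14}{3}=\frac23(p+q)$. More generally, for $x=\texttt{a}^n\texttt{b}$ the overlap is $p-1$, essentially the trivial bound.

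The step that breaks is your first bullet. The inequality $\ell>\frac23(p+q)$ implies $\ell>p$ only when $q\ge p/2$. So for $q<p/2$ the overlap need not contain a copy of $x_k$. Nor can a lexicographic contradiction be extracted, since Lyndon words may begin with arbitrarily high powers of a short word.

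What your setup does give is the complementary range. If $x_k$ is unbordered and $q<p$, then $\ell\ge p$ would force $x_k$ to have period $q$, i.e.\ a border of length $p-q$. Hence $\ell<p\le\frac23(p+q)$ whenever $q\ge p/2$, and $q=p$ is covered by your equivalence argument. So the whole difficulty lies in the range $q<p/2$, which is exactly where the Lyndon choice fails.

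The fallback does not close the gap. In the bi-infinite word $\cdots xxx\cdots$, position $k$ has local period $p$ (is critical) if and only if $x_k$ is unbordered. So the Lyndon rotation $\texttt{aaaaab}$ is already critical. Starting at the beginning of a longest run is in fact the worst possible choice, since it maximizes the small-period prefix of $\alpha[k]$.

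In the example, the admissible rotations are exactly $\texttt{aabaaa}$, $\texttt{abaaaa}$ and $\texttt{baaaaa}$. Besides $\texttt{aaaaab}$, two more rotations fail:
\begin{itemize}
\item $\texttt{aaaaba}$, witnessed by $s=\texttt{baaaabaaaa}$, with $q=5$ and overlap $9>\frac{22}{3}$;
\item $\texttt{aaabaa}$, witnessed by $s=\texttt{baaabaaa}$, with $q=4$ and overlap $7>\frac{20}{3}$.
\end{itemize}
Thus $k$ must be placed so as to cut every long run of small period, without creating borders that carry long periodic prefixes.

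Producing such a $k$ and proving that it works is the substance of the cited lemma. Breslauer et al.\ in fact obtain $\op{ov}(s,\alpha[k])\le\op{period}(s)+\frac12\op{period}(\alpha)$ for inequivalent $s$, which yields the $\frac23$ bound precisely when $q\le p/2$. Your proposal neither specifies such a $k$ nor supplies that argument.
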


We refer to a rotation $\alpha[k]$ obtained from
\cref{lm: overlap rotation lemma}
as the \emph{critical rotation}.
Lemma~5.1 of \cite{Breslaure.1997.OverlapRotationLemma}
was later restated in the reverse-complement setting
as Lemma~14 in \cite{YamanoShibuya2026SCSRC}.
The inequalities in property~(4) follow from the fact that, for distinct cycles $c$ and $d$,
the corresponding strings $t_c$ and $t_d$ are inequivalent from property~(3) and \cref{lm: distinct cycles are inequivalent}, 
while $\op{period}(t_c) = w(c)$ and $\op{period}(t_d) = w(d)$ from property~(3) and \cref{lm: cycle weight equals period}.

\begin{lemma}[\cite{Breslaure.1997.OverlapRotationLemma,YamanoShibuya2026SCSRC}]
\label{lm: extract good rotation}
Let
$c = (s_{i_1}', \dots, s_{i_r}', s_{i_1}')$
be a cycle in $\op{CYC}(G_{\op{dist}}(S))$.
Then there exist a string $t_c$ and an index $j$ such that the following properties hold:
\begin{bracketenumerate}
    \item
    The string
    $\langle s_{i_{j+1}}', \dots, s_{i_r}', s_{i_1}', \dots, s_{i_j}' \rangle$
    is a suffix of $t_c$.
    \item
    The string $t_c$ is a substring of
    $\langle s_{i_j}', \dots, s_{i_r}', s_{i_1}', \dots, s_{i_j}' \rangle$.
    \item
    The string $t_c$ is equivalent to
    $\langle s_{i_{j+1}}', \dots, s_{i_r}', s_{i_1}', \dots, s_{i_j}' \rangle$.
    \item
    The semi-infinite string $\op{factor}(t_c)^\infty$
    is the critical rotation of
    $\op{factor}(\langle s_{i_1}', \dots, s_{i_r}' \rangle)^\infty$.
    Moreover, let $t_d$ be the string obtained from this lemma corresponding to a distinct cycle $d \in \op{CYC}(G_{\op{dist}}(S))$, where $w(d) \le w(c)$. Then the following inequations hold:
    \[
        \op{ov}(t_d, t_c)
        \le \frac{2}{3} (w(d) + w(c)),
        \quad
        \op{ov}(\op{rc}(t_d), t_c)
        \le \frac{2}{3} (w(d) + w(c)).
    \]
\end{bracketenumerate}
Furthermore, the string $t_c$ can be computed in time linear in $w(c)$.
\end{lemma}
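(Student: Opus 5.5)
The plan is to build $t_c$ directly from the critical rotation given by \cref{lm: overlap rotation lemma}, and then check properties (1)--(4) in turn.

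\textbf{Setting up.} Let $u = \langle s_{i_1}', \dots, s_{i_r}' \rangle$ and $\alpha = \op{factor}(u)^\infty$. By \cref{lm: cycle weight equals period}, $\op{period}(\alpha) = \op{period}(u) = w(c)$. Apply \cref{lm: overlap rotation lemma} to $\alpha$ to obtain the critical rotation $\alpha[k]$; we may take $1 \le k \le w(c)$.

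\textbf{Choosing $j$ and $t_c$.} Consider the cyclic string obtained by writing the cycle once around:
\[
\op{pref}(s_{i_1}', s_{i_2}') \cdots \op{pref}(s_{i_r}', s_{i_1}'),
\]
which has length $w(c)$ and is a rotation of $\op{factor}(\alpha)$. Position $k$ falls inside exactly one block $\op{pref}(s_{i_j}', s_{i_{j+1}}')$, with indices taken cyclically; this defines $j$.

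Let $W = \langle s_{i_j}', \dots, s_{i_r}', s_{i_1}', \dots, s_{i_j}' \rangle$. Define $t_c$ as the suffix of $W$ beginning at the position corresponding to $k$. Since that position lies inside the leading block $\op{pref}(s_{i_j}', s_{i_{j+1}}')$, the following hold:
\begin{itemize}
\item $t_c$ contains $\langle s_{i_{j+1}}', \dots, s_{i_j}' \rangle$ as a suffix, which gives property (1);
\item $t_c$ is a substring of $W$, which gives property (2).
\end{itemize}

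\textbf{Equivalence and the critical rotation.} The strings $W$, $t_c$ and $\langle s_{i_{j+1}}', \dots, s_{i_j}' \rangle$ are all substrings of a shift of $\alpha$. Each has length exceeding $w(c)$, since it contains all $r$ distance blocks plus a full string.

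\cref{lm: cycle weight equals period} (applied to rotations of the cycle) shows that these strings have period exactly $w(c)$. I expect this period bookkeeping to be the main point needing care. Any substring of $\alpha$ of length at least $w(c)$ contains a full rotation of $\op{factor}(\alpha)$, so it has period at most $w(c)$. If its period were smaller, a Fine--Wilf style argument would force $\op{factor}(\alpha)$ itself to have a smaller period, contradicting its minimality.

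Granting this, the factors of these strings are the length-$w(c)$ windows of $\alpha$ starting at their respective positions, hence cyclic shifts of one another. This gives property (3). Since $t_c$ starts at the position corresponding to $k$, we also get $\op{factor}(t_c)^\infty = \alpha[k]$, the first part of property (4).

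\textbf{The overlap inequalities.} Let $d$ be a distinct cycle with $w(d) \le w(c)$, and let $t_d$ be the string this construction produces for $d$. By property (3) for $d$ together with \cref{lm: distinct cycles are inequivalent}, both $t_d$ and $\op{rc}(t_d)$ are inequivalent to $\alpha$. Both have period $w(d) \le w(c)$, using that reverse complement preserves the period.

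Since $t_c$ is a prefix of $\alpha[k]$, we have $\op{ov}(x, t_c) \le \op{ov}(x, \alpha[k])$ for any finite string $x$. There is one boundary case: an overlap equal to $|t_c|$ is excluded by the definition of overlap and by substring-freeness. \cref{lm: overlap rotation lemma} then yields both inequalities with bound $\frac{2}{3}(w(d) + w(c))$.

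\textbf{Running time.} Breslauer et al.\ compute the critical rotation index $k$ in time linear in the period $w(c)$. Locating $j$ and extracting $t_c$ are linear scans, so $t_c$ is obtained in time linear in $w(c)$.
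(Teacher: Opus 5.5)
Your construction is the standard one from Breslauer et al.: cut $W=\langle s'_{i_j},\dots,s'_{i_j}\rangle$ at the position of the critical rotation. Properties (1) and (2) follow correctly from it. Your derivation of the overlap bounds in (4) is the justification the paper itself gives: inequivalence of $t_d$ and $\op{rc}(t_d)$ from $\alpha$ via (3) and \cref{lm: distinct cycles are inequivalent}, then $\op{period}(t_d)=w(d)\le w(c)$, then $\op{ov}(x,t_c)\le \op{ov}(x,\alpha[k])$ because $t_c$ is a prefix of $\alpha[k]$.

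The step that fails is the one you flagged yourself. You claim that any substring of $\alpha$ of length at least $w(c)$ has period exactly $w(c)$ ``by a Fine--Wilf style argument''. This is false. Fine--Wilf needs length at least $p+q-\gcd(p,q)$, but here you only have length roughly $w(c)$. For instance, if $\op{ov}(s'_{i_j},s'_{i_{j+1}})=0$, then $\langle s'_{i_{j+1}},\dots,s'_{i_j}\rangle$ has length exactly $w(c)$ (it contains only $r-1$ distance blocks, not all $r$). A concrete counterexample: $\texttt{aba}$ is a substring of $(\texttt{aab})^\infty$ of length $3=\op{period}((\texttt{aab})^\infty)$, yet it has period $2$.

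Your property (3), the first half of (4), and the fact $\op{period}(t_d)=w(d)$ used in the overlap bounds all rest on this period claim, so the gap propagates. You also cannot take the paper's route, which reads $\op{period}(t_c)=w(c)$ off property (3) (``equivalent strings have the same period''), because you derive (3) from the period claim.

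The fix uses only what you already have: the minimal period can only shrink when passing to a substring.
\begin{itemize}
\item If $x[i]=x[i+q]$ for all valid $i$, then every substring of $x$ of length at least $q$ inherits this, and a shorter substring has period at most its length, which is below $q$.
\item Both $W$ and $t_c$ contain $\langle s'_{i_{j+1}},\dots,s'_{i_j}\rangle$ as a suffix. That string has period exactly $w(c)$ by \cref{lm: cycle weight equals period}, applied to the cycle read from $s'_{i_{j+1}}$.
\item Hence $\op{period}(t_c)\ge w(c)$ and $\op{period}(W)\ge w(c)$.
\item Both are substrings of a shift of $\alpha$ of length at least $w(c)$, which gives the matching upper bound.
\end{itemize}
Replace the Fine--Wilf step with this argument, and write ``at least $w(c)$'' instead of ``exceeding $w(c)$'' for the rotation string. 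After that, your factor-window argument for (3) and for $\op{factor}(t_c)^\infty=\alpha[k]$, and the rest of your proof, go through unchanged.
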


Lemma~2.6 of \cite{KAPLAN200513.greedy3.5n} was adapted
to the reverse-complement setting in Lemma~17 of \cite{YamanoShibuya2026SCSRC}.

\begin{lemma}[\cite{KAPLAN200513.greedy3.5n,YamanoShibuya2026SCSRC}]
\label{lm: bound for OPT of tc}
Let
$T = \{t_c \mid c \in \op{CYC}(G_{\op{dist}}(S))\}$,
where each string $t_c$ is obtained as in \cref{lm: extract good rotation}.
Then
\[
    \op{OPT}(T)
    \le
    \op{OPT}(S) + w(\op{CYC}(G_{\op{dist}}(S))).
\]
\end{lemma}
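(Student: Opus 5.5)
The statement to prove is \cref{lm: bound for OPT of tc}, $\op{OPT}(T) \le \op{OPT}(S) + w(\op{CYC}(G_{\op{dist}}(S)))$. The plan is to start from an optimal solution for $S$ and insert the missing material of each $t_c$ at one occurrence of one of its strings. This follows the Kaplan et al.\ argument adapted to reverse complements.

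\textbf{Setup.} Let $s^\ast$ be an optimal superstring for $S$ with $|s^\ast| = \op{OPT}(S)$. For each cycle $c = (s_{i_1}', \dots, s_{i_r}', s_{i_1}')$ of $\op{CYC}(G_{\op{dist}}(S))$, fix the index $j$ from \cref{lm: extract good rotation}. By property (2), $t_c$ is a substring of $u_c := \langle s_{i_j}', s_{i_{j+1}}', \dots, s_{i_r}', s_{i_1}', \dots, s_{i_j}' \rangle$. By \cref{lm: cycle weight equals period}, $|u_c| = w(c) + |s_{i_j}'|$, since the $r$ distance terms around the cycle sum to $w(c)$ and are followed by the final string $s_{i_j}'$.

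\textbf{Construction.} The string $s^\ast$ contains $s_{i_j}'$ or $\op{rc}(s_{i_j}')$ as a substring; fix one such occurrence.
\begin{itemize}
\item If the occurrence is of $s_{i_j}'$, replace it by $u_c$. Since $u_c$ begins with $s_{i_j}'$, this amounts to inserting the $w(c)$ characters $\op{pref}(s_{i_j}', s_{i_{j+1}}') \cdots \op{pref}(s_{i_r}', s_{i_1}') \cdots$ so that the original occurrence becomes the first block of $u_c$. Equivalently, write $u_c = s_{i_j}' z$ with $|z| = w(c)$ and insert $z$ right after the occurrence. The new string contains $u_c$, hence $t_c$, and it still contains every substring of $s^\ast$ that was preserved.
\item If the occurrence is of $\op{rc}(s_{i_j}')$, replace it by $\op{rc}(u_c)$. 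This string ends with $\op{rc}(s_{i_j}')$, so we prepend $\op{rc}(z)$ before the occurrence. The result contains $\op{rc}(t_c)$, which is acceptable for SCS-RC.
\end{itemize}

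\textbf{Preserving other strings.} The key subtlety is that inserting characters in the middle of $s^\ast$ may destroy occurrences of other strings crossing the insertion point. To avoid this, each insertion duplicates a full copy of the chosen occurrence rather than splitting it. Concretely, at the occurrence $s^\ast = a\, s_{i_j}'\, b$, form $a\, s_{i_j}' z\, b'$ so that $s_{i_j}'$ stays contiguous. However, strings overlapping the boundary between $s_{i_j}'$ and $b$ can break.

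The cleaner fix is to place the new material $u_c$ with its final copy of $s_{i_j}'$ aligned to the old occurrence: write $s^\ast = a\, s_{i_j}'\, b$ and form $a\, u_c\, b$. Then $u_c$ ends with $s_{i_j}'$. Any string occurrence in $s^\ast$ lying in $a\, s_{i_j}'$ that started inside $s_{i_j}'$ and extended into $b$ is preserved, because $u_c$ ends with the full $s_{i_j}'$. An occurrence that started in $a$ and ended inside $s_{i_j}'$ is in danger, but it is preserved if $u_c$ also begins with $s_{i_j}'$, which it does. The only dangerous occurrences are those starting in $a$ and ending in $b$, i.e.\ strictly containing $s_{i_j}'$. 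These cannot exist because $S \cup \op{rc}(S)$ is substring-free. Thus every occurrence in $s^\ast$ survives, and the length increases by exactly $|u_c| - |s_{i_j}'| = w(c)$. The reverse-complement case is symmetric, using $\op{rc}(u_c)$, which begins and ends with $\op{rc}(s_{i_j}')$.

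\textbf{Multiple cycles.} Distinct cycles use distinct representative strings. The chosen occurrences of $s_{i_j}'$ for different cycles are occurrences of different input strings, which by substring-freeness are not nested, though they may overlap. I expect this to be the main obstacle. The plan is to perform the replacements sequentially: after each replacement, the argument above shows that every occurrence, including the chosen occurrences for the remaining cycles, still exists in the new string, so the next replacement can be applied there. Each step adds exactly $w(c)$, and the final string contains $t_c$ or $\op{rc}(t_c)$ for every cycle. This yields $\op{OPT}(T) \le \op{OPT}(S) + \sum_c w(c) = \op{OPT}(S) + w(\op{CYC}(G_{\op{dist}}(S)))$.
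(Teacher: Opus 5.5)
\textbf{There is a genuine gap, in the multiple-cycles step.} Your single-replacement step is sound. Write $s^\ast = a\,y\,b$ with $y \in \{s_{i_j}', \op{rc}(s_{i_j}')\}$, and let $u$ be $u_c$ or $\op{rc}(u_c)$, which begins and ends with $y$. Then $a\,u\,b$ is longer by exactly $w(c)$, and an occurrence can be lost only if it strictly contains $y$, which substring-freeness excludes for strings of $S \cup \op{rc}(S)$. (Note that $a\,u_c\,b = a\,s_{i_j}'\,z\,b$, so your ``cleaner fix'' is literally your first construction. The worry about occurrences crossing the boundary between $s_{i_j}'$ and $b$ was unfounded, because $u_c$ ends with $s_{i_j}'$.)

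You correctly flagged the multiple-cycles step as the main obstacle, but you did not resolve it. After the first replacement, the string also contains the inserted blocks $u_{c'}$ (or $\op{rc}(u_{c'})$), and these must survive every later replacement. Your survival argument does not cover them. It rules out an occurrence strictly containing the replaced one only by substring-freeness, and a block $u_{c'}$ is not an input string. Such a block can contain an occurrence of a later representative strictly inside it, for instance one straddling two consecutive strings of $c'$. Replacing at such an occurrence destroys $u_{c'}$, and with it $t_{c'}$. So ``every occurrence still exists, so the next replacement can be applied there'' is not enough. You must say which occurrence is used and prove that no previously inserted block strictly contains it.

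\textbf{Repair by tracking intervals.}
\begin{itemize}
\item Fix one occurrence per input string in $s^\ast$; these are pairwise non-nested by substring-freeness.
\item Always replace at the tracked image of the chosen occurrence.
\item Carry the invariant that all tracked intervals, including the inserted blocks, are pairwise non-nested.
\end{itemize}
Suppose the family is non-nested and you replace at $[p,q]$, adding $w$ characters. Every other interval either has start $<p$ and end $<q$, and is unchanged, or has start $>p$ and end $>q$, and keeps its content shifted by $w$. The new block occupies $[p,q+w]$. One checks directly that the new family is again non-nested, so earlier blocks keep their content.

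\textbf{A cleaner route avoids occurrence bookkeeping altogether.} Write $s^\ast = \langle s_{q_1}', \dots, s_{q_m}' \rangle$. At the slot of each representative, splice in the closed walk $s_{i_j}', s_{i_{j+1}}', \dots, s_{i_{j-1}}', s_{i_j}'$ around $c$. If the slot holds the other orientation, use instead the reverse-complement image of the walk, which starts and ends at $\op{rc}(s_{i_j}')$ and also has total distance $w(c)$.
\begin{itemize}
\item The walk begins and ends with the string already in the slot, so distances to the neighbouring slots are unchanged. Hence each splice adds exactly $w(c)$ to $|\langle x_1,\dots,x_n\rangle| = \sum_{k<n} \op{dist}(x_k,x_{k+1}) + |x_n|$.
\item $\langle x_1,\dots,x_n\rangle$ contains $\langle x_i,\dots,x_k\rangle$ for every contiguous block, so the result contains $u_c$ or $\op{rc}(u_c)$.
\item Each representative has its own slot, so the splices do not interact.
\end{itemize}
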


\section{Approximation Algorithm} \label{sec: Approximation Algorithm}
We begin by presenting an overview in
\cref{alg: 2.67-approx algorithm overview},
which follows the framework of \cite{Breslaure.1997.OverlapRotationLemma}.

\subsection{Overview of the Entire \texorpdfstring{$\frac{8}{3}$}{8/3}-Approximation Algorithm}
In Step~2 of \cref{alg: 2.67-approx algorithm overview}, the string $t_j$
contains, for each vertex of the cycle $c_j$, the string corresponding to
that vertex, by Property~(1) of \cref{lm: extract good rotation}.
In Step~4, the resulting string is
$\langle t_{i_p}', \dots, t_{i_r}', t_{i_1}', \dots, t_{i_{p-1}}' \rangle$
if $t_{i_p}' = t_{i_p}$, and
$\langle t_{i_{p+1}}', \dots, t_{i_r}', t_{i_1}', \dots, t_{i_p}' \rangle$
otherwise.
In both cases, the resulting string contains, for each vertex of the cycle
$d$, the string corresponding to that vertex.
Together, these two observations ensure that
\cref{alg: 2.67-approx algorithm overview} outputs a valid approximate
solution of SCS-RC for the instance $S$.

In Step~4, we break each cycle with particular care.
Specifically, we remove an edge whose overlap is bounded as guaranteed by Property~(4) of
\cref{lm: extract good rotation},
thereby ensuring that the increase in length incurred by breaking the cycle
is properly controlled.
Observe that the length of the resulting string in Step~4 is
$w(d) + \op{ov}(t_{i_{p-1}}', t_{i_p}')$ when $t_{i_p}' = t_{i_p}$, and
$w(d) + \op{ov}(t_{i_p}', t_{i_{p+1}}')$ otherwise.

Note that self-loops are not permitted in this step.
Accordingly, self-loops are excluded in our definition of distance graphs, and hence do not appear in the minimum-weight cycle covers.
The algorithmic details of computing such cycle covers are deferred to
\cref{sec: Compute cycle cover}.

In the remainder of this subsection, we assume that the minimum-weight
cycle cover can be computed in polynomial time, and focus on the approximation analysis.
Specifically, we show that
\cref{alg: 2.67-approx algorithm overview}
achieves an approximation ratio of $\frac{8}{3}$.

\begin{theorem} \label{thm: Alg1 yields a 2.67 approximation}
\cref{alg: 2.67-approx algorithm overview} yields an
$\frac{8}{3}$-approximation for the SCS-RC problem.
\end{theorem}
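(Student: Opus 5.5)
We follow the analysis of Breslauer et al. and reconstruct the algorithm from the overview. Step~1 computes $\op{CYC}(G_{\op{dist}}(S))$ with cycles $c_1,\dots,c_k$. Step~2 extracts $t_j = t_{c_j}$ via \cref{lm: extract good rotation}, giving $T=\{t_1,\dots,t_k\}$. Step~3 computes a minimum-weight cycle cover of $G_{\op{dist}}(T)$. Step~4 breaks each cycle $d$ of that cover at a carefully chosen edge and concatenates the resulting strings.

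Write $W = w(\op{CYC}(G_{\op{dist}}(S)))$ and $W' = w(\op{CYC}(G_{\op{dist}}(T)))$. Since the Step~3 cover is a relaxation for $T$, we have $W' \le \op{OPT}(T)$. By \cref{lm: bound for OPT of tc}, $\op{OPT}(T) \le \op{OPT}(S) + W$, and hence $W' \le \op{OPT}(S)+W$.

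\textbf{Per-cycle length.} Fix a cycle $d$ of the second cover and let $t_{i_p}$ be its member of maximal original weight $w(c_{i_p})$. As observed above, the string produced for $d$ has length $w(d)$ plus the overlap of the removed edge, which enters or leaves $t_{i_p}'$. By property~(4) of \cref{lm: extract good rotation}, this overlap is at most $\frac23(w(c_{i_{p\pm1}})+w(c_{i_p}))$. The two inequalities there cover the cases where the neighbour appears as $t$ or as $\op{rc}(t)$. Since $t_{i_p}$ is oriented forward or as the reverse complement, we apply $\op{ov}(x,y)=\op{ov}(\op{rc}(y),\op{rc}(x))$ to reduce to one of these two forms. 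This orientation bookkeeping explains why the edge removed depends on whether $t_{i_p}'=t_{i_p}$.

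The overlap is therefore at most $\frac23\cdot 2\cdot\max_{c\in d} w(c)\le \frac43\sum_{c\in d} w(c)$, but this is too crude. We refine it using that cycles in a cover of $T$ have length at least $2$, since self-loops are excluded. The neighbour term $w(c_{i_{p\pm1}})$ is part of $\sum_{c\in d}w(c)$, and $w(c_{i_p})$ is as well. So the overlap is at most $\frac23\sum_{c\in d} w(c)$ when we bound by the sum of two distinct members. This is exactly where excluding self-loops is essential: a self-loop would force the bound $\frac43 w(c)$.

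\textbf{Summing.} Summing over all $d$, the output length is at most
\[
W' + \tfrac23 W \le \op{OPT}(S) + W + \tfrac23 W = \op{OPT}(S)+\tfrac53 W \le \tfrac83\op{OPT}(S),
\]
using $W\le \op{OPT}(S)$.

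\textbf{Main obstacle.} The main obstacle is verifying that property~(4) applies to the specific oriented pair at the removed edge. The lemma requires $w(d')\le w(c)$ for the \emph{second} argument $t_c$, so the heaviest string must be the target of the overlap. If $t_{i_p}$ is used as $\op{rc}(t_{i_p})$, we remove the outgoing edge instead and rewrite $\op{ov}(\op{rc}(t_{i_p}), t_{i_{p+1}}') = \op{ov}(\op{rc}(t_{i_{p+1}}'), t_{i_p})$. That rewritten term again has the form $\op{ov}(t_{d'},t_c)$ or $\op{ov}(\op{rc}(t_{d'}),t_c)$. Validity of the output was already argued above. Polynomial running time is deferred to the cycle-cover computation in \cref{sec: Compute cycle cover}.
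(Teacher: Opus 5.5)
Your proof is correct and follows the paper's argument essentially step for step. It uses the same bound $w(\op{CYC}(G_{\op{dist}}(T))) \le \op{OPT}(T) \le \op{OPT}(S) + w(\op{CYC}(G_{\op{dist}}(S)))$, the same case split on the orientation of the maximum-period vertex $t_{i_p}'$ (including the rewrite $\op{ov}(\op{rc}(t_{i_p}), t_{i_{p+1}}') = \op{ov}(\op{rc}(t_{i_{p+1}}'), t_{i_p})$), and the same bound of each lost overlap by $\frac{2}{3}$ times the sum of two distinct cycle weights in $d$.

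One correction to your side remark: with a self-loop, property~(4) would not apply at all, since it requires a distinct and hence inequivalent cycle. The self-overlap $\op{ov}(t_c,t_c)$ would then be uncontrolled, not bounded by $\frac{4}{3}w(c)$.
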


\begin{proof}
Let $\op{ALG}(S)$ denote the length of the string produced by
\cref{alg: 2.67-approx algorithm overview}.
By Step~4 of the algorithm, the length $\op{ALG}(S)$ can be expressed as
\[
\op{ALG}(S)
=
w(\op{CYC}(G_{\op{dist}}(T)))
+
\sum_{d \in \op{CYC}(G_{\op{dist}}(T))} OV_d,
\]
where $OV_d$ denotes the overlap lost when breaking the cycle $d$.
Combining \cref{lm: bound for OPT of tc} with the inequality
$w(\op{CYC}(G_{\op{dist}}(T))) \le \op{OPT}(T)$, we obtain
\[
w(\op{CYC}(G_{\op{dist}}(T)))
\le
\op{OPT}(S) + w(\op{CYC}(G_{\op{dist}}(S))).
\]
We next bound the term $OV_d$ for each cycle $d$.
Consider first the case where $t_{i_p}' = t_{i_p}$ in Step~4.
For convenience, we define $t_{i_0}' = t_{i_r}'$ and $c_{i_0} = c_{i_r}$.
In this case, the overlap lost is $\op{ov}(t_{i_{p-1}}', t_{i_p}') = \op{ov}(t_{i_{p-1}}', t_{i_p})$.
By \cref{lm: extract good rotation}, we have
\[
\op{ov}(t_{i_{p-1}}', t_{i_p})
\le
\frac{2}{3}\bigl(w(c_{i_{p-1}}) + w(c_{i_{p}})\bigr)
\le
\frac{2}{3} \sum_{\{j \mid t_j' \in d\}} w(c_{j}).
\]
Now consider the case where $t_{i_p}' = \op{rc}(t_{i_p})$.
We define $t_{i_{r+1}}' = t_{i_1}'$ and
$c_{i_{r+1}} = c_{i_1}$.
The overlap lost in this case is
$\op{ov}(t_{i_p}', t_{i_{p+1}}') = \op{ov}(\op{rc}(t_{i_p}), t_{i_{p+1}}')$.
Again by \cref{lm: extract good rotation}, we obtain
\[
\op{ov}(\op{rc}(t_{i_p}), t_{i_{p+1}}')
=
\op{ov}(\op{rc}(t_{i_{p+1}}'), t_{i_p})
\le
\frac{2}{3}\bigl(w(c_{i_{p + 1}}) + w(c_{i_{p}})\bigr)
\le
\frac{2}{3} \sum_{\{j \mid t_j' \in d\}} w(c_{j}).
\]
In both cases, the overlap lost for cycle $d$ is bounded by
$\frac{2}{3} \sum_{\{j \mid t_j' \in d\}} w(c_{j})$.
Summing over all cycles yields
\[
\sum_{d \in \op{CYC}(G_{\op{dist}}(T))} OV_d
\le
\frac{2}{3}
\sum_{d \in \op{CYC}(G_{\op{dist}}(T))}
\sum_{\{j \mid t_j' \in d\}} w(c_{j})
=
\frac{2}{3} w(\op{CYC}(G_{\op{dist}}(S))).
\]
Combining the above bounds, we conclude that
\[
\op{ALG}(S)
\le
\op{OPT}(S) + \frac{5}{3} w(\op{CYC}(G_{\op{dist}}(S)))
\le
\frac{8}{3} \op{OPT}(S),
\]
which completes the proof.
\end{proof}

\begin{algorithm}[t]
\caption{$\frac{8}{3}$-approximation algorithm for SCS-RC}
\label{alg: 2.67-approx algorithm overview}
\begin{enumerate}
    \item Construct the distance graph $G_{\op{dist}}(S)$ from the input string
    set $S$, and compute a minimum-weight cycle cover
    $\op{CYC}(G_{\op{dist}}(S)) = \{c_1, \dots, c_k\}$.

    \item For each $j \in \{1,\dots,k\}$, let $t_j$ be the string obtained from
    the cycle $c_j$ as described in \cref{lm: extract good rotation}, and let
    $T = \{t_1, \dots, t_k\}$.
    Construct the distance graph $G_{\op{dist}}(T)$.

    \item Compute a minimum-weight cycle cover
    $\op{CYC}(G_{\op{dist}}(T))$.

    \item For each cycle
    $d = (t_{i_1}', \dots, t_{i_r}', t_{i_1}')$
    in $\op{CYC}(G_{\op{dist}}(T))$,
    let $p \in \{1,\dots,r\}$ be an index such that
    $t_{i_p}'$ has maximum period among the vertices of $d$.
    If $t_{i_p}' = t_{i_p}$, break the cycle by deleting the edge incoming to
    $t_{i_p}'$; otherwise, delete the edge outgoing from $t_{i_p}'$.
    This produces a superstring containing all strings corresponding to the
    vertices of $d$.

    \item Concatenate the resulting strings in an arbitrary order.
\end{enumerate}
\end{algorithm}

\subsection{Computing the Constrained Optimal Cycle Cover}
\label{sec: Compute cycle cover}

We consider computing a maximum-weight cycle cover in the overlap graph,
rather than a minimum-weight cycle cover in the distance graph.
As discussed in \cref{sec: distance graph and cycle cover},
these two formulations are equivalent.
However, the overlap graph admits the symmetry
\[
    \op{ov}(s,t) = \op{ov}(\op{rc}(t), \op{rc}(s))
\]
for any strings $s$ and $t$,
which simplifies the presentation of our algorithm.
In the remainder of this subsection,
we fix an input set of strings $S$
and consider the problem of computing
a maximum-weight cycle cover in the overlap graph $G_{\op{ov}}(S)$.

\paragraph*{Overview of the cycle cover construction}
Our algorithm does not compute a cycle cover satisfying all constraints
in a single step; instead, it proceeds in two stages.

In the first stage, we impose only the constraint that,
for any pair of strings $s$ and $t$,
the edges $s \to t$ and $\op{rc}(t) \to \op{rc}(s)$
cannot be selected simultaneously.
This constraint is enforced via a gadget construction,
under which we compute a maximum-weight perfect matching.
From the resulting matching, we derive a second, symmetric matching
by replacing each edge $s \to t$ with its reverse-complement counterpart
$\op{rc}(t) \to \op{rc}(s)$.
The union of these two matchings
admits the extraction of a maximum-weight cycle cover
satisfying the constraints,
together with its reverse-complement image.
The overall procedure is illustrated in \cref{fig: Alg All}.

\paragraph*{Construction of the auxiliary graph $G'$ with gadgets}
For technical convenience, we work with oriented copies of strings.
For each string $s \in S$, we introduce two formal symbols
$s^+$ and $s^-$, representing $s$ and its reverse complement, respectively.
Let
\[
\tilde{S} = \bigcup_{s \in S} \{ s^+, s^- \}.
\]
Even if $s = \op{rc}(s)$, the two elements $s^+$ and $s^-$
are treated as distinct vertices.
This convention allows us to uniformly model the problem as selecting exactly one representative from each pair $\{s^+, s^-\}$, regardless of whether $s = \op{rc}(s)$ or not.

For $u = s^+$ and $v = s^-$, we define $\bar{u} = v$ and $\bar{v} = u$.
More generally, for any $u \in \tilde{S}$, we denote by $\bar{u}$
the oriented copy corresponding to the reverse complement.
We also associate each $u \in \tilde{S}$ with a string
$\op{str}(u)$, defined by
\[
\op{str}(s^+) = s
\quad\text{and}\quad
\op{str}(s^-) = \op{rc}(s).
\]

We recall the standard reduction from cycle covers to perfect matchings.
Given a directed graph, a cycle cover can be computed by reducing the problem to a perfect matching in a bipartite graph, where each vertex is split into an outgoing and an incoming copy.

As in this reduction,
each vertex $u \in \tilde{S}$ is split into two vertices
$u_{\op{out}}$ and $u_{\op{in}}$,
representing outgoing and incoming edges, respectively.
However, unlike the standard setting,
we treat the two vertices $\{u, \bar{u}\}$ as a single cluster.

Formally, we first define a bipartite graph
$\tilde{G} = (V_{\op{out}}, V_{\op{in}}, \tilde{E}, w)$
constructed from the overlap graph $G_{\op{ov}}(S)$.
The vertex sets are defined as
\[
V_{\op{out}} = \{ u_{\op{out}} \mid u \in \tilde{S} \}
\quad\text{and}\quad
V_{\op{in}} = \{ u_{\op{in}} \mid u \in \tilde{S} \}.
\]

The edge set $\tilde{E}$ is defined by
\[
\tilde{E} =
(V_{\op{out}} \times V_{\op{in}})
\setminus
\bigcup_{u \in \tilde{S}}
\bigl\{ (u_{\op{out}}, u_{\op{in}}),
       (\bar{u}_{\op{out}}, \bar{u}_{\op{in}}) \bigr\}.
\]
For each edge $(u_{\op{out}}, v_{\op{in}}) \in \tilde{E}$,
the weight is defined as
$w(u_{\op{out}}, v_{\op{in}}) =
\op{ov}(\op{str}(u), \op{str}(v))$.

\begin{figure}[t]
  \centering
  \begin{minipage}[t]{0.475\linewidth}
    \centering
    \includegraphics[width=\linewidth]{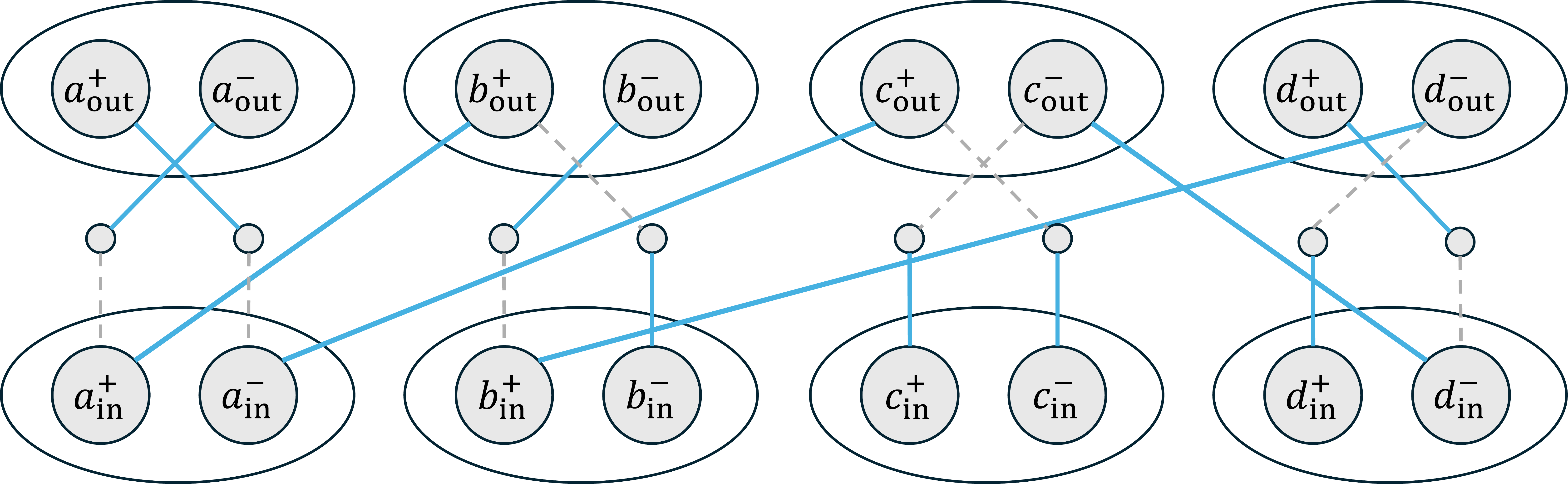}
    \subcaption{Compute a maximum-weight perfect matching in $G'$.
The small vertices in the center correspond to auxiliary vertices.
The matching is shown in light blue.
Dashed gray edges represent edges incident to auxiliary vertices
that are not included in the matching.}
    \label{fig: alg1}
  \end{minipage}\hfill
  \begin{minipage}[t]{0.475\linewidth}
    \centering
    \includegraphics[width=\linewidth]{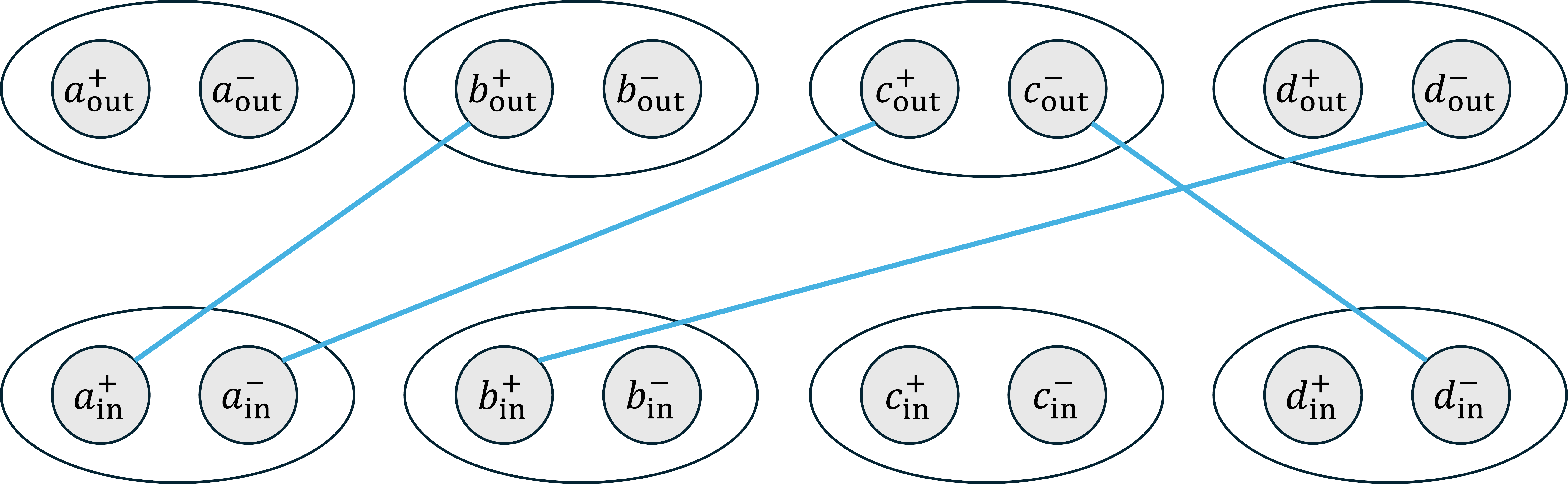}
    \subcaption{Remove all edges incident to auxiliary vertices to obtain the edge set $F$.}
    \label{fig: alg2}
  \end{minipage}

  \medskip

  \begin{minipage}[t]{0.475\linewidth}
    \centering
    \includegraphics[width=\linewidth]{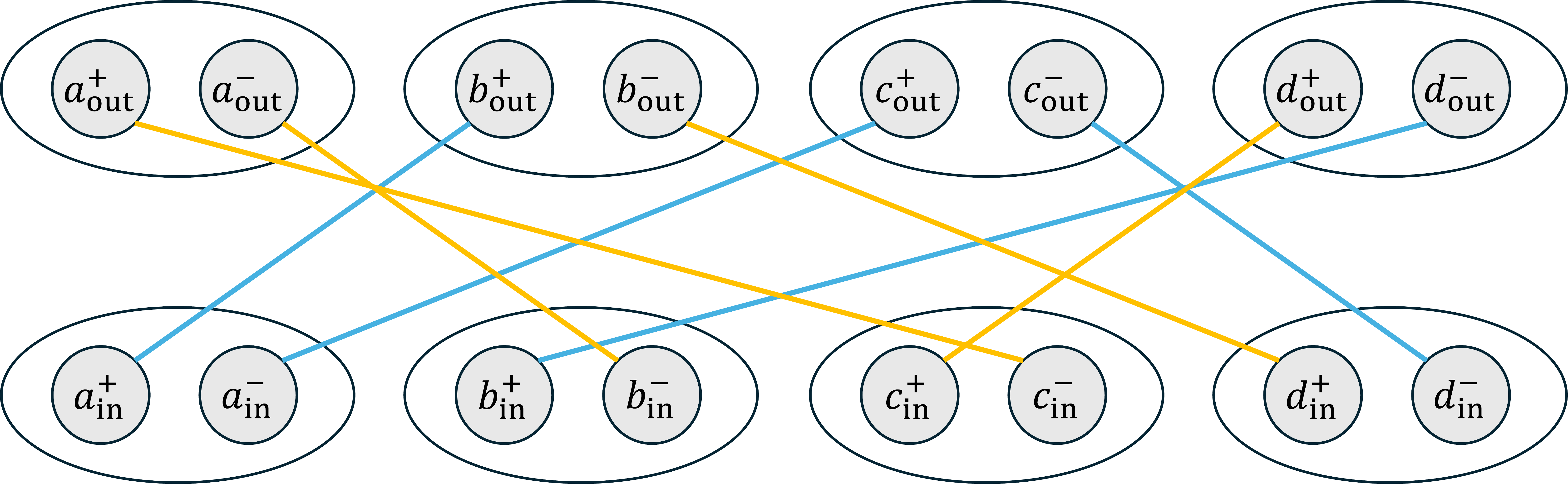}
    \subcaption{Compute the reverse-complement image $\bar{F}$, shown in yellow.}
    \label{fig: alg3}
  \end{minipage}\hfill
  \begin{minipage}[t]{0.475\linewidth}
    \centering
    \includegraphics[width=\linewidth]{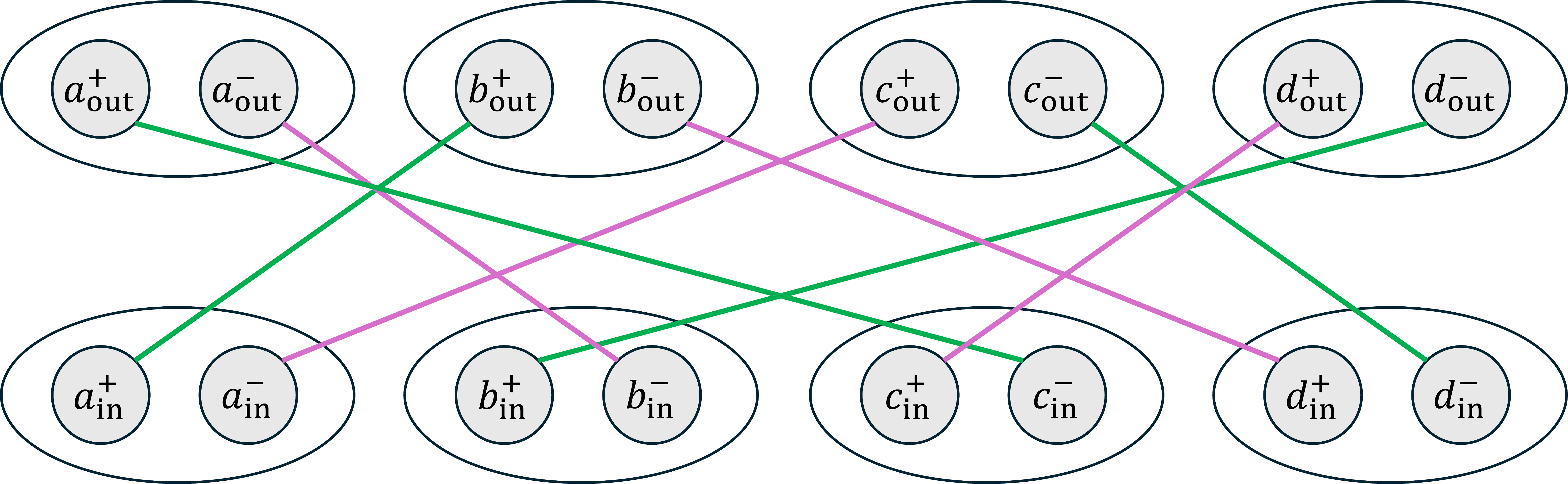}
    \subcaption{Extract two matchings $M$ and $\bar{M}$ corresponding to the resulting maximum-weight cycle cover.
Matching $M$, shown in green, corresponds to the cycle
$(a, \op{rc}(c), \op{rc}(d), b, a)$, whereas matching $\bar{M}$,
shown in purple, corresponds to the cycle
$(\op{rc}(a), \op{rc}(b), d, c, \op{rc}(a))$.}
    \label{fig: alg4}
  \end{minipage}

  \caption{An example illustrating the computation of a maximum-weight
cycle cover in an overlap graph induced by the string set
$S = \{a, b, c, d\}$.}
  \label{fig: Alg All}
\end{figure}

A matching in $\tilde{G}$ that corresponds to a valid cycle cover
of the overlap graph $G_{\op{ov}}(S)$
must satisfy the following condition.
For each $u \in \tilde{S}$, exactly one of the following two cases holds:
\begin{romanenumerate}
    \item $u_{\op{out}}$ and $u_{\op{in}}$ are matched, and $\bar{u}_{\op{out}}$ and $\bar{u}_{\op{in}}$ are unmatched;
    \item $\bar{u}_{\op{out}}$ and $\bar{u}_{\op{in}}$ are matched, and $u_{\op{out}}$ and $u_{\op{in}}$ are unmatched.
\end{romanenumerate}
Case (i) corresponds to selecting the string $\op{str}(u)$,
whereas case (ii) corresponds to selecting its reverse complement $\op{str}(\bar{u})$.
Enforcing this constraint directly within a matching formulation,
however, turns out to be technically challenging.

Instead, we impose a weaker constraint:
for each $u \in \tilde{S}$,
the vertices $u_{\op{out}}$ and $\bar{u}_{\op{in}}$
cannot be matched simultaneously.
To enforce this condition, we construct a graph $G'$
by augmenting $\tilde{G}$ with the following gadget.
For each $u \in \tilde{S}$,
we introduce an auxiliary vertex $x_u$
and add two zero-weight edges
$(u_{\op{out}}, x_u)$ and $(x_u, \bar{u}_{\op{in}})$.

\paragraph*{Obtaining the edge sets $F$ and $\bar{F}$ from a maximum-weight perfect matching in $G'$}
The graph $G'$ may no longer be bipartite due to the introduction of auxiliary gadgets.
Accordingly, we compute a maximum-weight perfect matching in $G'$
using an algorithm for general graphs.
From the obtained matching, we discard all edges incident to auxiliary vertices
and denote the remaining edge set by $F \subseteq \tilde{E}$.
An example of this procedure is shown in \cref{fig: alg1,fig: alg2}.
Since the matching is computed under a relaxation of the constraints
required for a valid cycle cover,
the total weight of $F$ is at least that of a maximum-weight cycle cover
of $G_{\op{ov}}(S)$.

The constraints enforced by the gadgets ensure that, for each
$u \in \tilde{S}$, edges outgoing from $u_{\op{out}}$ and edges incoming to
$\bar{u}_{\op{in}}$ are not selected simultaneously in $F$.
This implies that, for any distinct $u, v \in \tilde{S}$, the two edges
$(u_{\op{out}}, v_{\op{in}})$ and
$(\bar{v}_{\op{out}}, \bar{u}_{\op{in}})$
cannot be selected simultaneously.
Consequently, we can construct the reverse-complement image of $F$
by replacing each edge
$(u_{\op{out}}, v_{\op{in}}) \in F$
with the edge
$(\bar{v}_{\op{out}}, \bar{u}_{\op{in}})$.
We denote the resulting edge set by $\bar{F}$.
An example of this construction is shown in \cref{fig: alg3}.
By the symmetry of the overlap weights, namely
$\op{ov}(\op{str}(u), \op{str}(v)) = \op{ov}(\op{str}(\bar{v}), \op{str}(\bar{u}))$,
the total weight of $\bar{F}$ is equal to that of $F$.
Observe that the union $F \cup \bar{F}$ is a perfect matching of $\tilde{G}$.

\paragraph*{Extracting two optimal cycle covers from $F \cup \bar{F}$}
We prove that there exists a matching $M$ in $\tilde{G}$ corresponding to a maximum-weight cycle cover of the overlap graph $G_{\op{ov}}(S)$ such that $F \cup \bar{F} = M \cup \bar{M}$, where $\bar{M}$ denotes the reverse-complement image of $M$. An example illustrating the extraction of $M$ and $\bar{M}$ is shown in \cref{fig: alg4}. Since $F \cup \bar{F}$ is a perfect matching of $\tilde{G}$, it induces a collection of vertex-disjoint directed cycles covering all vertices of $\tilde{S}$. Moreover, the total weight of $F \cup \bar{F}$ is an upper bound on the total weight of $M \cup \bar{M}$, since $F$ is obtained from a relaxation of the constraints, as discussed above. Therefore, it suffices to show that each such cycle is valid with respect to the definition of a cycle cover in the original overlap graph $G_{\op{ov}}(S)$.

\begin{lemma} \label{lm: Valid Extraction}
No cycle induced by $F \cup \bar{F}$ contains both
a vertex $u \in \tilde{S}$ and its reverse-complement counterpart $\bar{u}$.
\end{lemma}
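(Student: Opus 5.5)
The approach is a symmetry argument. Let $\varphi$ be the involution on edges of $\tilde{G}$ given by $\varphi(u_{\op{out}}, v_{\op{in}}) = (\bar{v}_{\op{out}}, \bar{u}_{\op{in}})$. By construction $\varphi(F) = \bar{F}$, and since $\varphi$ is an involution, also $\varphi(\bar{F}) = F$. Hence $F \cup \bar{F}$ is invariant under $\varphi$. We already know that $F \cup \bar{F}$ is a perfect matching of $\tilde{G}$, so $F \cap \bar{F} = \emptyset$ and every vertex $u \in \tilde{S}$ has exactly one outgoing and one incoming edge in the induced directed graph. Moreover, $\varphi$ sends a directed edge $a \to b$ to the edge $\bar{b} \to \bar{a}$. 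Therefore, if $C = (u_1, \dots, u_k, u_1)$ is a cycle of $F \cup \bar{F}$, then $\bar{C} = (\bar{u}_k, \dots, \bar{u}_1, \bar{u}_k)$ is also a cycle of $F \cup \bar{F}$.

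Now suppose some cycle $C$ contains both $u$ and $\bar{u}$. Then $\bar{C}$ also contains $\bar{u}$. Since the cycles are vertex-disjoint, $\bar{C} = C$. Thus the fixed-point-free involution $\sigma\colon w \mapsto \bar{w}$ maps the vertex set of $C$ onto itself and reverses the cyclic order. Consequently $k$ is even, since $\sigma$ pairs up the vertices of $C$ with no fixed points. Moreover, $\sigma$ acts on the cyclic sequence $u_1, \dots, u_k$ as a reflection. A reflection of a $k$-cycle with no fixed vertex has its axis passing through the midpoints of two edges, and it fixes exactly those two edges. Concretely, there is an index $i$ with $u_{i+1} = \bar{u}_i$ (indices taken mod $k$). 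So $C$ contains an edge $e = (a_{\op{out}}, \bar{a}_{\op{in}})$ with $\varphi(e) = e$. Making this combinatorial step precise, that an orientation-reversing fixed-point-free involution of a cycle fixes an edge of the form $a \to \bar{a}$, is the main point. I would prove it directly by indexing so that $\sigma(u_i) = u_{c-i}$ for some constant $c$. Since $\sigma$ has no fixed vertex, $c$ must be odd, and taking $i = (c-1)/2$ gives $u_{i+1} = \sigma(u_i)$.

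It remains to derive a contradiction from such a fixed edge $e = (a_{\op{out}}, \bar{a}_{\op{in}})$. If $e \in F$, then $a_{\op{out}}$ and $\bar{a}_{\op{in}}$ are both matched by non-auxiliary edges in the perfect matching of $G'$. The auxiliary vertex $x_a$ is adjacent only to $a_{\op{out}}$ and $\bar{a}_{\op{in}}$, so $x_a$ would be unmatched, contradicting perfectness. If instead $e \in \bar{F}$, then $e = \varphi(e) \in \varphi(\bar{F}) = F$, which reduces to the previous case (and would in any case contradict $F \cap \bar{F} = \emptyset$). Either way we reach a contradiction, so no cycle of $F \cup \bar{F}$ contains both $u$ and $\bar{u}$.
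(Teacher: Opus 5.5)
Your proof is correct and has the same skeleton as the paper's. Both use invariance of $F\cup\bar F$ under your involution $\varphi$, then find a ``meeting point'' edge $a\to\bar a$ on a cycle containing both $u$ and $\bar u$, and then derive a contradiction; your reflection argument with $c$ odd is a rigorous version of the paper's simultaneous forward/backward traversal. The one real difference is the last step, and there your version is the more robust one. The paper rules out $a\to\bar a$ by asserting that $\tilde E$ has no edge from a vertex to its reverse complement, but the displayed definition of $\tilde E$ removes only the self-loops $(u_{\op{out}},u_{\op{in}})$, so that claim is not literally supported. You instead derive the contradiction from the gadget: $(a_{\op{out}},\bar a_{\op{in}})\in F$ would leave $x_a$ unmatched, and since this edge is $\varphi$-fixed it cannot lie in $\bar F$ without lying in $F$. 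Your argument therefore works whether or not such edges belong to $\tilde E$.
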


\begin{proof}
Suppose, for the sake of contradiction, that there exists a directed cycle
\[
c = (v_1, \ldots, \bar{v}_1, \ldots, v_1)
\]
that contains both a vertex $v_1 \in \tilde{S}$ and its reverse-complement counterpart
$\bar{v}_1$.
Let $v_i \to v_{i+1}$ be an arbitrary directed edge of $c$.

By construction, for every edge $(e_{\op{out}}, f_{\op{in}}) \in F \cup \bar{F}$,
its reverse-complement image $(\bar{f}_{\op{out}}, \bar{e}_{\op{in}})$
also belongs to $F \cup \bar{F}$.
Therefore, whenever the cycle $c$ contains the edge
$v_i \to v_{i+1}$ and also contains the vertex $\bar{v}_i$,
it must contain the edge
$\bar{v}_{i+1} \to \bar{v}_i$ as well.

Consider traversing the cycle $c$ simultaneously from $v_1$ forward
and from $\bar{v}_1$ backward.
Since both $v_1$ and $\bar{v}_1$ lie on the same directed cycle,
the two traversals must eventually meet.
At the meeting point, there exists an index $i$ such that
both $v_i$ and its reverse-complement $\bar{v}_i$
appear consecutively on $c$,
implying that $c$ contains a directed edge of the form
$v_i \to \bar{v}_i$.

However, such an edge cannot exist:
by definition of $\tilde{G}$, the edge set $\tilde{E}$
contains no edge connecting a vertex to its reverse complement.
This contradiction completes the proof.
\end{proof}

By \cref{lm: Valid Extraction}, for each cycle $c$ induced by $F \cup \bar{F}$,
its reverse-complement image $\bar{c}$ is also induced by $F \cup \bar{F}$ and forms a distinct vertex-disjoint cycle.
Hence, every induced cycle belongs to a unique pair $\{c,\bar{c}\}$, and from each pair, we can arbitrarily choose exactly one cycle.
The selected cycles form a collection of vertex-disjoint directed cycles
that covers exactly one vertex from each pair $\{u, \bar{u}\}$ for $u \in \tilde{S}$,
and hence yields a valid cycle cover of $G_{\op{ov}}(S)$.

To conclude this section, we have shown that a maximum-weight cycle cover
of the overlap graph $G_{\op{ov}}(S)$ can be computed in polynomial time.
Using the equivalence between the maximum-weight cycle cover of $G_{\op{ov}}(S)$
and the minimum-weight cycle cover of $G_{\op{dist}}(S)$,
this establishes \cref{thm: compute cycle cover}.

\begin{theorem} \label{thm: compute cycle cover}
Given a distance graph $G_{\op{dist}}(S)$,
the minimum-weight cycle cover \\
$\op{CYC}(G_{\op{dist}}(S))$ can be computed in polynomial time.
\end{theorem}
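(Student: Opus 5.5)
The plan is to assemble the construction of this subsection into a complete polynomial-time procedure and verify the two facts that the preceding discussion uses informally. First, build the overlap graph $G_{\op{ov}}(S)$, the bipartite graph $\tilde{G}$ on $V_{\op{out}}\cup V_{\op{in}}$, and the augmented graph $G'$ with one auxiliary vertex $x_u$ per $u\in\tilde{S}$. The graph $G'$ has $6m$ vertices and $O(m^2)$ edges, and all overlaps can be computed in polynomial time. Then compute a maximum-weight perfect matching in $G'$ with Edmonds' algorithm for general graphs. From this matching we obtain $F$, then $\bar{F}$, and we extract cycles from $F\cup\bar{F}$, choosing one cycle from each pair $\{c,\bar c\}$. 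Every step is clearly polynomial. The final cycle cover of $G_{\op{dist}}(S)$ is read off via $\op{dist}(x,y)=|x|-\op{ov}(x,y)$. Since every cycle cover selects exactly one string from each pair, the sum $\sum|x|$ is the same constant for all covers, so maximizing overlap is the same as minimizing distance.

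The correctness argument proceeds in three checks. (a) \emph{Relaxation bound.} Take any valid cycle cover $C$ of $G_{\op{ov}}(S)$, viewed as a matching $M_C$ in $\tilde G$ on the selected vertices. Extend it to a perfect matching of $G'$ of equal weight: for each unselected $\bar u$, match $u_{\op{out}}$ to $x_{u}$? This fails, since $u_{\op{out}}$ is used. Instead, for each selected $u$ match $x_{\bar u}$ with $\bar u_{\op{out}}$ and $x_u$ with $\bar u_{\op{in}}$. Both endpoints are free because $\bar u$ is unselected, and the edges exist since $x_u$ is adjacent to $\bar{u}_{\op{in}}$ and $x_{\bar u}$ to $\bar u_{\op{out}}$. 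Hence the optimum matching weight, which equals $w(F)$, is at least $w(C)$. (b) \emph{Structure of $F$.} Every $x_u$ must be matched in a perfect matching. So, for each $u$, exactly one of $u_{\op{out}},\bar u_{\op{in}}$ is consumed by an auxiliary edge. Counting, $|F|=2m$, and the $\tilde E$-edges cover, for each $u$, exactly one of $u_{\op{out}},\bar u_{\op{in}}$. This means $F$ and $\bar F$ are disjoint, and together they cover every vertex of $\tilde G$ exactly once. Therefore $F\cup\bar F$ is a perfect matching of $\tilde G$ of weight $2w(F)$, and it is closed under the reverse-complement map.

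(c) \emph{Extraction.} By \cref{lm: Valid Extraction}, no cycle of $F\cup\bar F$ contains both $u$ and $\bar u$. By closure, the cycles come in disjoint pairs $\{c,\bar c\}$ with $w(c)=w(\bar c)$ by the symmetry $\op{ov}(s,t)=\op{ov}(\op{rc}(t),\op{rc}(s))$. Choosing one cycle from each pair yields a valid cycle cover $C^*$ of weight $\tfrac12 w(F\cup\bar F)=w(F)\ge w(C)$ for every valid $C$. Hence $C^*$ is optimal.

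I expect the main obstacle to be step (b): showing rigorously that $F\cap\bar F=\emptyset$ and that the union is perfect. The edges of $F$ could in principle pair up so that an edge and its image both lie in $F$, which would break the count. My first attempt would be a direct degree/counting argument. The gadget forbids $u_{\op{out}}$ and $\bar u_{\op{in}}$ from both being matched in $F$. An edge $(u_{\op{out}},v_{\op{in}})$ and its image $(\bar v_{\op{out}},\bar u_{\op{in}})$ would use both $u_{\op{out}}$ and $\bar u_{\op{in}}$, which is excluded. The remaining check is a covering argument: each vertex of $\tilde G$ lies in exactly one of $F$ or $\bar F$.
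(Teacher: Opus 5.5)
Your proposal is correct and follows the paper's own route (gadget graph $G'$, maximum-weight perfect matching, the symmetric union $F\cup\bar F$, \cref{lm: Valid Extraction}, and choosing one cycle from each pair $\{c,\bar c\}$). It also makes explicit two steps the paper only asserts: extending a valid cover to a perfect matching of $G'$ of equal weight, and the covering argument that $F\cup\bar F$ is a perfect matching of $\tilde G$. The only slip is the count in (b): a perfect matching of $G'$ has $3m$ edges, $2m$ of them auxiliary, so $|F|=m$ rather than $2m$. This also follows from your own correct observation that $F$ covers exactly one vertex of each pair $\{u_{\op{out}},\bar u_{\op{in}}\}$, and nothing downstream depends on the miscount.
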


Combining \cref{thm: compute cycle cover}
with \cref{thm: Alg1 yields a 2.67 approximation},
we obtain a polynomial-time $\frac{8}{3}$-approximation algorithm for SCS-RC.
This completes the proof of our first main result,
\cref{thm: 2.67 approximation} from \cref{sec: Our Contributions}.

\section{Hardness Results}
In this section, we establish the hardness results stated in
\cref{thm: SCS -> SCS-RC reduction,thm: arbitrary -> quaternary alphabet reduction}
from \cref{sec: Our Contributions}.
We restate the theorems here for ease of reference.
\SCStoSCSRCreduction*
\ToQuaternaryReduction*
We first prove \cref{thm: SCS -> SCS-RC reduction} by presenting a reduction from SCS to SCS-RC that preserves the optimal solution length.
We then prove \cref{thm: arbitrary -> quaternary alphabet reduction}
by presenting a ratio-preserving reduction from SCS-RC over an arbitrary alphabet
to SCS-RC over the DNA alphabet.

\subsection{Proof of Theorem~\ref{thm: SCS -> SCS-RC reduction}}
Let $S = \{s_1, \dots, s_m\}$ be an arbitrary instance of SCS, and let
$A = \{a_1, \dots, a_k\}$ denote the alphabet consisting of all characters
appearing in the strings of $S$.
From this instance, we construct a corresponding instance of SCS-RC as follows.
Define a new alphabet
\[
B = \{a_1, \dots, a_k, a_{k+1}, \dots, a_{2k}\},
\]
and introduce a complement mapping
$cm : B \rightarrow B$ defined by
\[
cm(a_i) =
\begin{cases}
a_{i+k}, & \text{if $1 \le i \le k$,} \\
a_{i-k}, & \text{if $k+1 \le i \le 2k$.}
\end{cases}
\]
It is immediate that $cm$ is an involution on $B$.

Using the alphabet $B$ equipped with the complement mapping $cm$,
we consider the SCS-RC instance with the same input set $S$.
We show that any approximate solution to this SCS-RC instance
can be transformed into a solution that uses only strings from $S$,
without increasing its length.

Let $t$ be an arbitrary approximate solution to the constructed SCS-RC instance.
There exists a permutation $(i_1, \dots, i_m)$ of $\{1, \dots, m\}$ such that
$t$ can be written as
\[
t = \langle s'_{i_1}, \dots, s'_{i_m} \rangle.
\]

We define a set of intervals $I$ as follows:
\begin{align*}
I = \{ [l, r] \mid\;& 1 \le l \le r \le m,~\text{and}~(l = 1 \ \text{or}\ s'_{i_{l-1}} = s_{i_{l-1}}),~\text{and}~(r = m \ \text{or}\ s'_{i_{r+1}} = s_{i_{r+1}}), \\
&\text{and}~s'_{i_k} = \op{rc}(s_{i_k}) \ \text{for all } k \in \{l, \dots, r\} \}.
\end{align*}
That is, each interval $[l,r] \in I$ corresponds to a maximal substring
of reverse-complemented strings in $t$.
For each interval $[l, r] \in I$, the string $t$ contains the substring
\[
x = \langle \op{rc}(s_{i_l}), \dots, \op{rc}(s_{i_r}) \rangle.
\]
We replace this substring by its reverse complement
\[
\op{rc}(x) = \langle s_{i_r}, \dots, s_{i_l} \rangle.
\]

Since the complement mapping $cm$ maps symbols in $A$
to symbols in $B \setminus A$ and vice versa,
no overlap can occur between strings from $S$
and strings from $\op{rc}(S)$.
Therefore, this transformation does not decrease any existing overlap,
and hence does not increase the total length of the superstring.
By repeatedly applying this operation to all intervals in $I$,
we eventually obtain a solution that consists solely of strings from $S$,
and whose length is no greater than that of the original solution.

In particular, any optimal solution to the original SCS instance
is also an optimal solution to the constructed SCS-RC instance.
Moreover, any $\beta$-approximate solution to the constructed SCS-RC instance
can be transformed into a solution to the original SCS instance of the same length.
Hence, a $\beta$-approximation algorithm for SCS-RC
yields a $\beta$-approximation algorithm for SCS.

\subsection{Proof of Theorem \ref{thm: arbitrary -> quaternary alphabet reduction}}
Let $S = \{s_1, \dots, s_m\}$ be an arbitrary instance of SCS-RC, and let
$A = \{a_1, \dots, a_k\}$ be the alphabet consisting of all characters
appearing in the strings of $S \cup \op{rc}(S)$, equipped with a complement mapping
$cm \colon A \rightarrow A$.
Since $cm$ is an involution, for each $a \in A$, either $cm(a) = a$ or $cm(a) \ne a$ and $cm(cm(a)) = a$.
Without loss of generality, we may relabel the symbols in $A$ so that there
exists an index $j$ with $0 \le j \le \lfloor \frac{k}{2} \rfloor$ such that
\[
cm(a_i) = 
\begin{cases}
a_{i+j}~&\text{if } 1 \le i \le j, \\
a_{i-j}~&\text{if } j+1 \le i \le 2j, \\
a_i~&\text{if } 2j+1 \le i \le k.
\end{cases}
\]
Here, we define a morphism
$h \colon A^* \rightarrow \Sigma_{\mathrm{DNA}}^*$.
First, for each symbol $a_i \in A$, let
\[
h(a_i) =
\begin{cases}
\texttt{A}^i (\texttt{A}\texttt{G})^{(k+1-i-j)} \texttt{G}^i
& \text{if } 1 \le i \le j, \\[2mm]
\texttt{C}^{(i-j)} (\texttt{C}\texttt{T})^{(k+1-i)} \texttt{T}^{(i-j)}
& \text{if } j+1 \le i \le 2j, \\[2mm]
\texttt{A}^{(i-j)} (\texttt{A}\texttt{T})^{(k+1-i)} \texttt{T}^{(i-j)}
& \text{if } 2j+1 \le i \le k,
\end{cases}
\]
and extend $h$ to $A^*$ by concatenation, i.e., $h(xy)=h(x)h(y)$ for all $x,y \in A^*$.
Recall that the complement mapping $cm$ on $\Sigma_{\mathrm{DNA}}$
is defined by
$cm(\texttt{A}) = \texttt{T}$ and
$cm(\texttt{C}) = \texttt{G}$.
By construction, we have $\op{rc}(h(a_x)) = h(cm(a_x))$ for every $a_x \in A$.
Consequently, $\op{rc}(h(s)) = h(\op{rc}(s))$ holds for every string $s \in A^*$.
Moreover, for any distinct symbols $a_x, a_y \in A$ with $x \neq y$,
the strings $h(a_x)$ and $h(a_y)$ do not overlap.
The only overlap of $h(a_x)$ with itself is the trivial one of its full length.
Finally, note that $|h(a_x)| = 2(k+1-j)$ for all $a_x \in A$; hence,
$|h(s)| = 2(k+1-j)\,|s|$ for every string $s \in A^*$.

Now consider the instance
$R = \{h(s) \mid s \in S\}$ over $\Sigma_{\mathrm{DNA}}^*$.
Let $r$ be an arbitrary approximate solution to the SCS-RC instance $R$,
which can be represented by a permutation $(i_1, \dots, i_m)$ of $\{1, \dots, m\}$ as
$r = \langle h(s'_{i_1}), \dots, h(s'_{i_m}) \rangle$.
By the non-overlapping property of the encoding $h$, as shown above,
that is, overlaps can occur only at the boundaries of the blocks $h(a_i)$, $r$ can be transformed into an approximate solution
$s = \langle s'_{i_1}, \dots, s'_{i_m} \rangle$
to the original instance $S$, and we have
$|r| = 2(k+1-j)\,|s|$.
This implies that
\[
\op{OPT}(R) = 2(k+1-j)\,\op{OPT}(S).
\]
Moreover, let $\op{ALG}(R)$ denote the length of a $\beta$-approximate solution
to the instance $R$.
By the above argument, this solution can be transformed into a solution
to the original instance $S$ of length $\op{ALG}(R) / 2(k+1-j)$.
Therefore, we obtain
\[
\frac{\op{ALG}(R) / 2(k+1-j)}{\op{OPT}(S)}
= \frac{\op{ALG}(R)}{\op{OPT}(R)}
\le \beta,
\]
which shows that a $\beta$-approximate solution for $R$
yields a $\beta$-approximate solution for $S$.

\section{Conclusion and Future Work}

In this paper, we established an $\frac{8}{3}$-approximation algorithm for the SCS-RC problem and proved several hardness results.
Our algorithm is based on computing an optimal constrained cycle cover on the distance graph, which can be viewed as a special case of the assignment-based relaxation for the generalized traveling salesman problem (GTSP), where the requirement of forming a single tour is relaxed to allow multiple disjoint cycles.

It is well known that the assignment-based relaxation of the classical traveling salesman problem reduces to finding a minimum-weight cycle cover in the standard sense, which is solvable in polynomial time.
In contrast, the assignment-based relaxation of the GTSP is NP-hard, as shown in \cite{noon1988generalized}, where the hardness follows from a reduction to three-dimensional matching and involves clusters of size $\Theta(|V|^{\frac{2}{3}})$.

Our result demonstrates that the problem studied in this paper constitutes a nontrivial special case of the GTSP assignment-based relaxation that nevertheless admits a polynomial-time optimal algorithm.
Identifying more precise structural conditions that separate tractable cases from intractable ones remains an important direction for future work.

Another important direction for future work is to further improve the approximation guarantee.
One possible approach is to improve the compression ratio of the compression subroutine
used in greedy-based algorithms~\cite{YamanoShibuya2026SCSRC}.
Here, the compression ratio is defined as the ratio between the reduction from the total input length
achieved by an approximate solution and that achieved by an optimal solution.
For the standard SCS problem, approximation algorithms for the maximum asymmetric traveling salesman problem
can be employed as black-box subroutines.
In contrast, for the SCS-RC problem, one must additionally account for clusters formed by reverse-complement pairs,
which constitutes a nontrivial extension of this approach.
A deeper understanding of how such small clusters affect both computational complexity
and approximability may lead to stronger approximation algorithms.


\bibliography{lipics-v2021-sample-article}

\end{document}